\newtheorem{theorem}{Theorem}[section]
\newtheorem*{claim}{Claim}
\theoremstyle{definition}
\newtheorem*{protocol}{Protocol}
\theoremstyle{remark}
\newtheorem{example}[theorem]{Example}
\newtheorem{note}[theorem]{Note}
\numberwithin{equation}{section}
\numberwithin{figure}{section}
\newcommand{\innerprod}[2]{\langle {#1} , {#2} \rangle}
\newcommand{\R}{{\mathbb R}}
\newcommand{\mcH}{{\mathcal H}}
\newcommand{\C}{{\mathbb C}}
\newcommand{\bbar}{\overline}
\newcommand{\til}{\widetilde}
\def\hline{\bigskip\hrule\bigskip}  
\newcommand{\ket}[1]{\left|{#1}\right\rangle}
\newcommand{\bra}[1]{\left\langle{#1}\right|}
\journal{Annals of Physics}
\begin{document}

\begin{frontmatter}

\title{Topological Phases: An Expedition off Lattice}

\author[sba]{Michael H. Freedman}
\author[zrh]{Lukas Gamper}
\author[zrh,tor]{Charlotte Gils}
\author[zrh]{Sergei V. Isakov}
\author[sba]{Simon~Trebst}
\author[zrh]{Matthias Troyer}

\address[sba]{Microsoft Station Q, University of California, Santa Barbara, CA 93106}
\address[zrh]{Theoretische Physik, ETH Zurich, 8093 Zurich, Switzerland}
\address[tor]{Samuel Lunenfeld Research Institute, Mount Sinai Hospital, 600 University Ave, Toronto, ON M5G 1X5, Canada}

\date{\today}

\begin{abstract}
Motivated by the goal to give the simplest possible microscopic foundation for a broad class of topological phases,
we study quantum mechanical lattice models where the topology of the lattice is one of the dynamical variables. 
However, a fluctuating geometry can remove the separation between the system size 
and the range of local interactions, which is important for topological protection and ultimately the stability of a
topological phase. In particular, it can open the door to a pathology, which has been studied in the context of 
quantum gravity and goes by the name of `baby universe', Here we discuss three distinct approaches to suppressing
these pathological fluctuations. 
We complement this discussion by applying Cheeger's theory relating the geometry 
of manifolds to their vibrational modes to study the spectra of Hamiltonians.
In particular, we present a detailed study of the statistical properties of loop gas and string net models 
on fluctuating lattices, both analytically and numerically. 
\end{abstract}

\begin{keyword}
statistical mechanics \sep topological phases


\end{keyword}

\end{frontmatter}


\section{Introduction}

Topological states of matter have become a major topic in condensed matter theory and experiment. 
Quite simple microscopic Hamiltonians, e.g. those of  fractional quantum Hall (FQH) systems, are known to harbor emergent topological phases 
but often determining the correct effective low-energy theory is difficult: it becomes a delicate matter of energetics once one moves beyond idealized short ranged interactions. Another way of idealizing interactions comes in the form of quantum lattice models. In recent years lattice models have become another, complementary source of topological phases. Examples include the well-studied toric code model \cite{ToricCode} and the Kitaev honeycomb model \cite{Kitaev},
which realizes the Ising topological quantum field theory (TQFT) in a controlled perturbative regime. A more general class of lattice models was introduced by Levin and Wen \cite{LW} -- ``Turaev-Viro" \cite{TV} in the math literature -- which are rich sources of exactly solved achiral (``doubled") theories.

There is, however, a trade-off in using these models. In the exactly solved LW lattice models we know everything, but the Hamiltonians look artificial: finely tuned, 12-body interactions without the comfort of familiar ``kinetic" and ``potential" terms. In FQH the Hamiltonian is simple and natural, but the connection to the (or several competing) topological phases is not obvious. The motivation of this paper is the search for a middle ground where a simple local Hamiltonian can be clearly identified with a unique topological phase. Our chief innovation is to treat the lattice itself as a {\em dynamic} variable so that there is no fixed lattice underlying the Hilbert space, hence the words ``off lattice" in the title. Our search for this middle ground is presented as a travel log, with some surprises, disappointments, and discoveries. Along the way we came to better appreciate what exactly a fixed lattice is good for and what adaptations its absence requires. Briefly, a lattice model supplies two length scales, the lattice scale $a$ and the length of period $L$, where topological protection comes from an error scaling: $\epsilon\approx e^{-{\rm const} \cdot L/a}$. Protecting quantum information without this ratio of scales, e.g. when going off lattice, is a key challenge.

Levin-Wen models place degrees of freedom (``labels") on the edges of a trivalent graph dual to a fixed triangulation $\Delta$ of a surface $\Sigma$, where the labels come from a fusion category. 
The labeled graph is called a ``string net". There is a mathematical fact encouraging us to leave the concept of a {\em fixed} lattice behind. Starting with a consistent $F$ symbol (i.e. one obeying the pentagon relations) for a fusion category 
\begin{equation}
  \parbox{2cm}{
    \scalebox{0.25}{
      \includegraphics{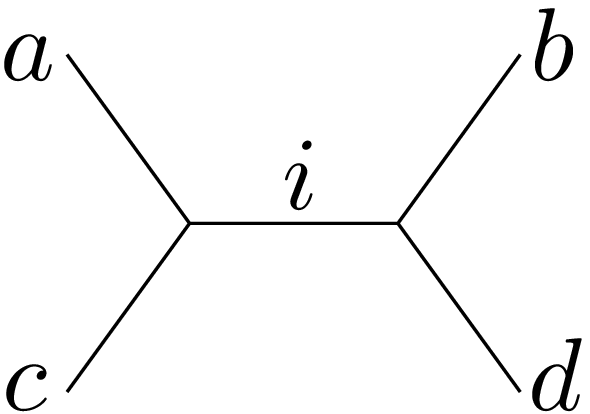}
    }
  }
  = \sum_j F^{abi}_{cdj}
  \parbox{2cm}{
    \scalebox{0.25}{
      \includegraphics{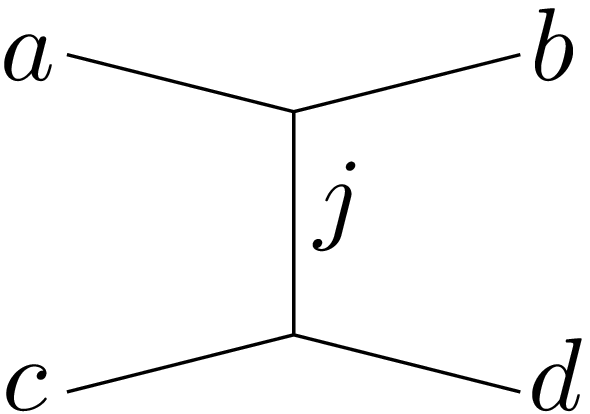}
    }
  }
  \label{fcat}
\end{equation}
and a closed surface $\Sigma$, the vector space of admissibly labeled string nets on $\Sigma$ modulo isotopy and $F$ is canonically isomorphic to the LW ground state Hilbert space $V(\Sigma)$ built from a fixed dual triangulation. Clearly (\ref{fcat}) is a simpler starting point than the LW Hamiltonian, but there is a serious problem: no single lattice (we use the word ``lattice" here to mean the dual structure to a triangulation of $\Sigma$) will contain simultaneously bonds forming the ``stick figures" on both the left and right hand sides of (\ref{fcat}); the $F$-move necessarily modifies the lattice.

In fact the LW Hamiltonian arises from a composition of such $F$-moves which --- like a
perturbation through excited states --- begins and finally ends on the same
lattice. For the honeycomb lattice, the LW Hamiltonian is a 12-body term, arising from 6 $F$-moves. Our idea here is to construct an environment where
the $F$-move itself may be written directly into the Hamiltonian forcing the underlying lattice to fluctuate. 

This paper starts by reviewing the problems we encounter when  going off-lattice in Sec. \ref{sec:hazards} and then present various ways to achieve off-lattice topological models in Sec. \ref{sec:models} before concluding with a summary of the lessons learned from this off-lattice expedition in Sec. \ref{sec:conclusions}. The appendices contain a number of results that are interesting beyond the main subject of this paper. \ref{Appendix:Spectrum} 
 proves a discrete version of Cheeger's theorem appropriate graphs with weights on edges and vertices extending work of Chung \cite{Chung}. We use it to derive bounds on the gaps of graph Laplacians and other local Hamiltonians. \ref{Appendix:OuterPlanar} derives an upper bound for the scaling of the gap of the graph Laplacian on outer planar triangulations, following an idea developed with Oded Schramm.  \ref{Appendix:C} presents exact results for an off-lattice version of the toric code model, and finally \ref{Appendix:Numerics} presents numerical results for the gap of the Graph Laplacian and Cheeger's constant for trees, triangulations of the sphere and outer planar triangulations.

\section{Off Lattice Hazards}
\label{sec:hazards}

Let us begin by cataloging three hazards which await us off-lattice. 

\subsection{Baby Universe}\label{subsection_baby_universe}

The first hazard is well known in Euclidean quantum gravity. It is called ``minbus''
or ``baby universe'' \cite{AR}. It refers to the fact that if a triangulation (or any
other polyhedral decomposition) is chosen uniformly at random\footnote{or by any other local formula.} for a sphere
(or other closed surface of fixed genus $g$) among all triangulations of a
fixed number $n$ of triangles, it is likely that there will be a short dual loop
containing numerous vertices on both sides (or in the case of surfaces with
$g>0$ there will also be short nonseparating loops). We measure the length of
a (dual) loop simply as the number of edges it crosses. This can be
formalized by saying that Cheeger's \cite{Cheeger} isoperimetric constant $\longrightarrow 0$
when $n\longrightarrow \infty$, almost always
\begin{equation}
k=\min_{\text{separating dual loops } \gamma} \frac{length(\gamma)}
  {\min[area(S), area(\bar{S})]},
\end{equation}
where $S$ and $\bar{S}$ are the components of the surface $\Sigma$ minus
$\gamma$, and $area(S)$ ($area(\bar{S})$) is the number of vertices (or sites)
in $S$ ($\bar{S}$).

It is known that typically a triangulated sphere has
$k\approx O(\frac{\log n}{n})$.
There is in fact an asymptotic formula \cite{Schaeffer} for the number
$\#_g(n)$ of isomorphically distinct divisions of a closed genus $g$ surface
into squares with the condition that the dual graph has no odd cycles:
\begin{equation}
\#_g(n)=12^n n^{\frac{5}{2} (g-1)}=12^n n^{-\frac{5}{4}\chi},
\label{eq:nd}
\end{equation}
where $\chi$ is the Euler number.
It is only a simplifying technicality to treat these quadrangulations rather
than triangulations --- similar asymptotics should apply but with 12 replaced
by some other less convenient base; the $n^{-\frac{5}{4}\chi}$ is universal.

An easy application of Eq.~(\ref{eq:nd}) (we thank Gilles Schaeffer for bringing
this to our attention) is that it is possible to estimate the fraction $f_4(n)$
of genus $g$ surfaces that are divided by a separating curve of length four
into surfaces of genus $g_1$ and $g_2$, $g_1,g_2>0$. The method is to remove
a random square from random surfaces of  $g_1$ and $g_2$, glue the results
together and count how many ways this is possible as a fraction of surfaces of
genus $g=g_1+g_2$. The result is:
\begin{equation}
f_4(n)\approx n^{-1/2}.
\end{equation}

To summarize: constant size bottle necks are algebraically likely and
logarithmic bottlenecks are virtually assured. This appears to be very
unfavorable for the protection of topological information. When working with
lattice models, we are used to error rates appropriate to tunneling problems
like $e^{-\text{const}\cdot L}$, where $L$ is the linear dimension of the lattice as a multiple of the lattice constant.
If bottlenecks reduce $L$ to constant or even logarithmic size, the protection
disappears. This is the first issue to come to terms with when considering uniformly random triangluations (URT).

For quite a different reason, these bottlenecks and baby universes have been
unwelcome also in quantum gravity. There one seeks a Hamiltonian function on
triangulation which concentrates near Euclidean flat space but still allows
a liquid of possible universes. Forty years of effort have failed to find such
a phase even in dimension 2 (at least in the homogeneous setting), for a review see \cite{AR}. It is possible
to form a ``branching polymer universe'' of Hausdorff dimension $2$, 
to perturb about a single rigid Euclidean crystal, 
or a ``collapsed phase'' of infinite Hausdorff dimension, 
but a nearly flat, yet liquid, phase, or even a critical point, has been elusive. However, in the last dozen years progress has been
made by breaking the symmetry between space and time and allowing only 
triangulations appropriately
foliated by space-like leaves \cite{AR}. The approach is called ``causal
dynamical triangulation'' (CDT) and has been shown numerically to
provide ``birth control'' \cite{AJL} --- there are parameter regimes, called the ``C phase", with no baby
universes in which the space-like leaves are on average nearly Euclidean of the desired dimension. In Section 2.3 we describe an approach to building a $(1+1)+1$ dimensional model for a 2+1-dimensional anyonic system in which the 2 spatial dimensions are broken into a 1+1 pair to exploit the favorable statistical geometry of 1+1 CDTs.


\subsection{Gapless Modes}\label{subsection_gapless_modes}

The second issue with URT is the mixing time. We
analytically estimated the Cheeger isoperimetric constant $h$ in a toy model
of surface triangulations called ``outer planar'' triangulations. In this
context we show analytically $h \preccurlyeq n^{-\frac{1}{2}}$, whereas in our numerical
study (presented in \ref{Appendix:N-gons})
the first eigenvalue $\lambda$ of the graph Laplacian (the ``graph''
has vertices outer plane triangulations and edges plaquette flips between these)
goes like $\lambda\approx n^{-2}$ (this translates to the first eigenvalue
$\tilde{\lambda}$ of the graph incidence matrix scaling like
$\tilde{\lambda} \approx n^{-1}$ since $\lambda\approx\tilde{\lambda}{n^{-1}}$ in our models\footnote{
   We need to remove an exponentially small number of states from the Hilbert space to obtain this scaling.
   For a detailed discussion see \ref{Appendix:C} and \ref{Appendix:Numerics}.}). Cheeger like inequalities show (see
\ref{Appendix:Spectrum}): 
\begin{equation}
2h \succcurlyeq \lambda \succcurlyeq \frac{h^2}{2}\,, 
\end{equation}
or $$n^{-1}\succcurlyeq h \succcurlyeq n^{-2} \,.$$
We believe the truth is near the high end $h\approx n^{-1}$, and that on the sphere $\lambda\approx{n^{-1.75}}$.

As a further probe of the spectrum, we studied  the dynamics of string nets on the 2-sphere 
(see \ref{Appendix:NumericsSphere})
and also observed
a mixing time $\approx n^2$, i.e. $\lambda \approx n^{-2}$, and
$\tilde{\lambda}\approx n^{-1}$. String nets are dual to triangulations but slightly more flexible, e.~g. a closed loop in a string net is permitted whereas the usual definition of triangulation does not permit a triangle to be glued to itself.

All this confirms the findings of the quantum gravity community: the space of
random triangulations, quadrangulations, string nets, etc. on a surface will
mix algebraically fast but not so fast (which would need to be $O(n)$) so
that the $\tilde{\lambda}$ first eigenvalue of the incidence matrix is gapped
or, equivalently, that the Cheeger constant $k$ (of the graph-of-triangulations) is bounded away from zero.


\subsection{Local Distinguishability}\label{subsection_local_distinguishability}

A third problem was noticed when we studied multi loops on a periodic
honeycomb lattice. The chosen dynamics is that of the toric code \cite{ToricCode}, however, we
made the convention that if two multi loops were isotopic, deformable one to
the other, then they would be identified and represented by a single ket.
In \ref{Appendix:C}, we present data which shows that the trivial winding
sector can easily  be picked out from the other three by a ``local''
observation --- we count the number of ``leaves''  --- that is loops with no
smaller loops within. In retrospect, this is no surprise. Being in the trivial
sector allows the possibility of no essential loops whatever --- this possibility
permits more space on the lattice for leaves.

Local distinguishability is, of course, the death knell of topological
protection. A state which can be observed locally can be acted on by a local
operator. This is a third disturbing finding if we grant that leaves are to
be considered local structures. Since metrical notions have been temporarily banished, it is up to our intuition to reformulate the appropriate meaning of ``local opeator." Leaf detection has, in this context, as good of a claim to being local as does any operator.


\subsection{Work-arounds for off Lattice Troubles}

Of the three problems, (2) is the least concerning. Even if there are low energy metrical fluctuations
(one may dub them ``gravity waves"), they appear decoupled from topological degrees of freedom which can be encoded on each lattice. With respect to problems (1) and (3), there is a somewhat solipsistic solution 
to the apparent loss of topological
protection from (1) bottlenecks and (3) variations in leaf count. It is
simply to deny that this is a problem. Once kets of the Hilbert space are
isotopy classes (of triangulations or nets --- perhaps together with a particle
type labelling of the bonds) we have lost direct contact with any notion of
a position coordinate $\vec{x}$. Isotopy slides and stretches, so we no
longer know what is long or short or even where we are. This viewpoint leads
one to say there are no local operators at all and therefore topological
protection --- protection against local operations --- is tautological. But such
a view comes with a heavy price --- without  a position coordinate $\vec{x}$
correlation functions loose meaning and contact with the condensed matter
notions disappears. Consequently we will not take this path but rather
consider three distinct approaches all of which enforce flat Euclidean space as
the background, but in quite different ways and with quite different
results.

In summary, we find that there are plausible and even intriguing ways to model topological phases off lattice. The next step should be to identify a case where the model variables can be mapped to electron degrees of freedom.


\section{Enforcing Flat Space: Crystalline, Liquid off Lattice Models, and CDT}
\label{sec:models}

Topological phases of lattice models are known in the physics
literature from Levin and Wen \cite{LW} and in the math literature
from Turaev and Viro \cite{TV}.  We explore what happens in taking
such a model ``off lattice'' by including the underlying cell
structure or ``lattice'' among the dynamic variables.  Thus, our
Hilbert space $\mathcal{H}$ will be spanned by kets which are pairs
$|(\Delta, S) \rangle$ where $\Delta$ is a triangulation of the
surface $\Sigma$ with $n$ (fixed) triangles and $S$ is a ``labeling''
of the dual edges of $\Delta$.  We can equally well focus on the dual
string net and its variations, some of which may not be dual to actual
triangulations.  It is the dual edges which form our so-called
lattice, e.g.\ the dual edges form a honeycomb if $\Delta$ is the
standard triangulation of the plane by equilateral triangles.  The
labels are from a pivotal fusion category.  Two interesting examples
are with label set \{1, $\tau$\} of the Fibonacci theory Fib and fusion rule $\tau \otimes \tau =
1 \oplus \tau$, which when applied as labels on string nets yields the theory Dfib \cite{FFNWW}, or with
label set $\{1,x\}$ and fusion rule $x \otimes x = 1$, which gives the
toric code \cite{Kitaev}.  Rather than speaking in generalities, we give
our constructions in the former case.  They are easily extended to the
broader class.

In moving off lattice we can be timid or bold but as we have argued we
must find some way to tie our lattices to the Euclidian plane.  Our
Hamiltonian can charge energy for defects in a base lattice, say the
honeycomb, or it can treat all triangulations equally.  There is, of
course, an adjustable parameter connecting large to zero energy
penalty for defects.  However, decades of experience with 2D quantum
gravity (qg) models suggest that there is a single phase
transition from a phase with amplitudes clustered near the original
honeycomb (we call this crystalline) and a liquid of lattices whose
geometry is almost surely ``cactus shaped'' -- Cheeger constant $\sim
\frac{\log n}{n}$.  In the quantum gravity community the lack of an intermediated
phase was the cause of some despair, the cactus buds being called
``baby universes.''  As we remarked in the introduction, this problem may have been solved
in the quantum gravity context \cite{AJL,AR} by introducing an appropriate causal
structure. This is explored in Section 2.3, where 3D space is split into a radial ``space" coordinate $\rho$ and a periodic ``pseudo-time" coordinate $\theta$. But if one insists that 2D space be treated homogenously, then the baby universes must be faced.  We will do this but first let us explore moving only timidly
off the honeycomb.  There we find a gapped model which is conceptually
very simple (we think more simple than the LW model) but the price of
our timidness is that the gap is absurdly small, perhaps $\approx
\epsilon \left( \frac{1}{4} \right)^{54}$, where $\epsilon$ is the
energy scale of the individual terms.  In spite of the disappointingly
small gap, we next explain this model as it is a nice, controlled
context for stepping -- ever so slightly -- off lattice. A bolder step
will be taken later.

\subsection{Crystalline Case}\label{subsection_crystalline_case}

Our first Hamiltonian has the form $H_{qg} = H_{qg}^0 + \delta D$. $H_{qg}$ acts on the direct sum of fibers of a bundle of states over
the moduli space of string nets (thought of as metrics) on $\Sigma$, say a torus.  The terms of
$H_{qg}^0$ are of two types:
\begin{enumerate}
\item Fusion constraints; these are projectors acting within fibers
\item $F$-moves; these act between adjacent fibers
\end{enumerate}
The fibers are degrees of freedom on the edge set of any given string net.
$F$-moves define a connection linking these fibers together which, because
of the pentagon relation, trivialize the subbundle satisfying fusion constraints.

The term $\delta D$ is an energy penalty which charges energy
$\delta$ for each pair of (5-gon, 7-gon) pairs created by an $F$-move (see Figure 2.4a). The number of such defects is counted by the operator $D$.  The expected
Levin-Wen 12-body plaquette term is {\em not} directly included but we
will show that it arises at high order by considering the process which virtually breaks a $\tau$-labeled string, resulting in a pair of ``electric" excitations\footnote{In the notation of \cite{FFNWW} the electric pair may be either $(\tau \otimes 1, \tau \otimes 1)$ or $(1 \otimes \tau, 1 \otimes \tau)$.}
costing energy $2\epsilon$, see \cite{Freedman}.
We consider two
triangulations $\Delta$ and $\Delta'$ (and their dual nets (or
``lattices'') $N$ and $N'$) to be equivalent if they are isotopic on
$\Sigma$, i.e. if we can slide one to the other.  Later, we set
$\delta$ to zero to obtain a lattice liquid, then add a string tension
term and also tie each net to $\Sigma$ in a fixed way (unrestricted isotopy will no longer be permitted);
but for now we study the crystalline case and need not distinguish isotopic nets.  The dynamics on the set of
nets $\mathcal{N}_n$ dual to $n$-vertex triangulations is shown in
Fig.~\ref{fig_move}.

\begin{figure}[htpb]
\begin{center}
\includegraphics[scale=0.5]{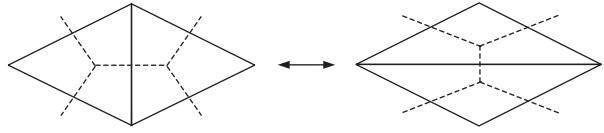}
\end{center}
\caption{H-I move, called $F$-move when coefficients are added as in (1.1).}
\label{fig_move}
\end{figure}

Now let us define $H_{qg}^0$ more precisely.  First, it enforces
fusion rule terms at each vertex of each net $N^i$ by penalizing the
illegal Fibonacci fusion (see Fig.~\ref{fork}) and its symmetries. 
\begin{figure}[ht]
\labellist \normalsize\hair 2pt

  \pinlabel $\tau$ at -10 240
  \pinlabel $1$ at -10 0
  \pinlabel $1$ at 320 120

\endlabellist
\centering
\includegraphics[scale=0.2]{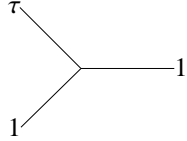}
\caption{Illegal Fibonacci fusion.} \label{fork}
\end{figure}

Second, it contains terms between states of
adjacent nets $N$ and $N'$ which enforce the unitary $F$-symbol
$\displaystyle \left| \begin{array}{lr} \tau^{-1} & \tau^{1/2} \\
    \tau^{1/2} & -\tau^{-1} \end{array} \right|$, $\tau = \frac{1 +
  \sqrt{5}}{2}$. Let $v$, $w$ be the normalized states of $H$, shown in
Fig.~\ref{fig_states}.  The second terms of $H_{qg}^0$ are of the form
$(id - |v\rangle\langle v|)$ and $(id - |w\rangle\langle w|)$.  
\begin{figure}[htpb]
\labellist \normalsize\hair 2pt

  \pinlabel $v=$ at 15 220
  \pinlabel $w=$ at 15 75
  \pinlabel $-\tau^{-1}$ at 215 220
  \pinlabel $-\tau^{1/2}$ at 370 220
  \pinlabel $-\tau^{1/2}$ at 215 75
  \pinlabel $+\tau^{-1}$ at 370 75

\endlabellist
\centering
\includegraphics[scale=0.325]{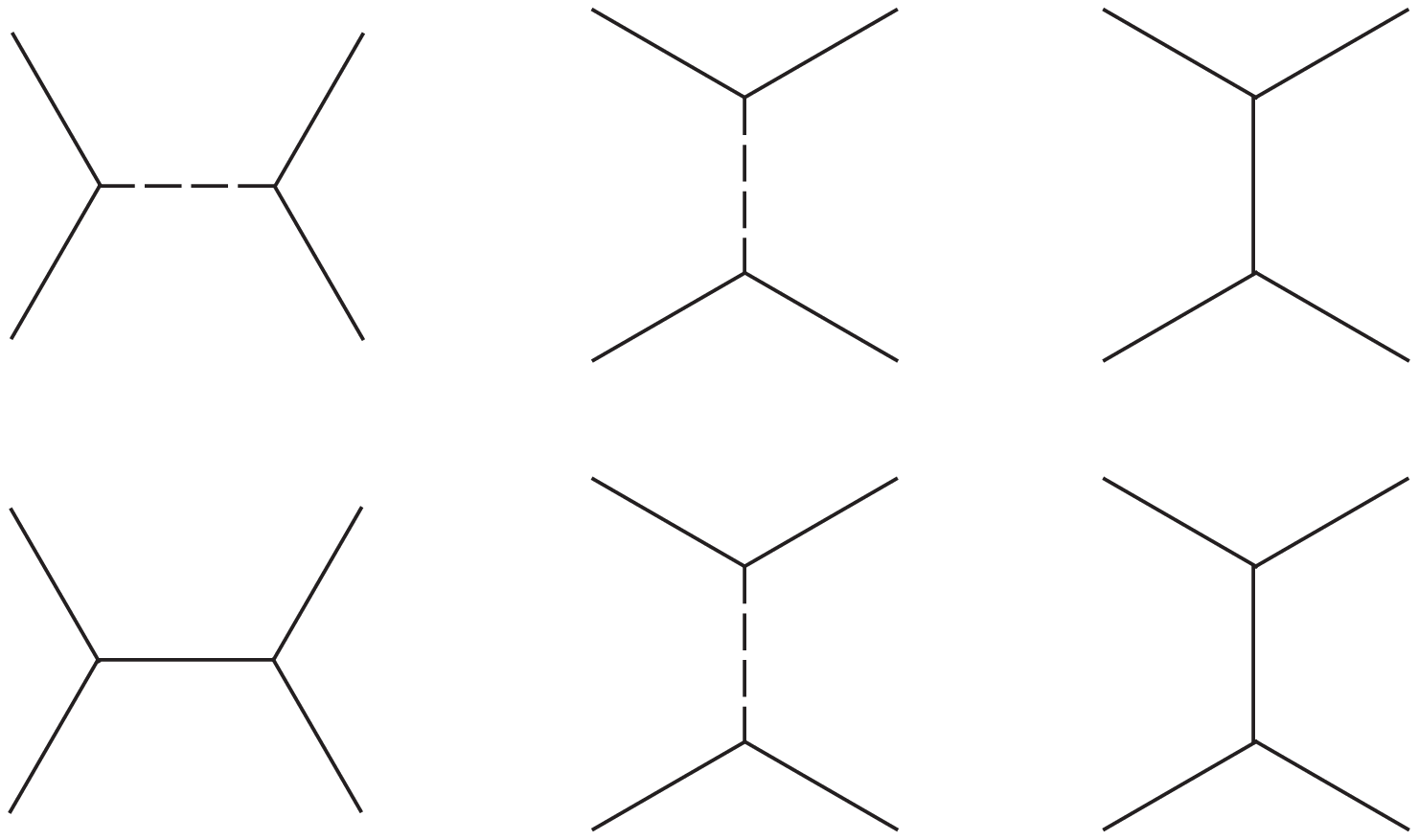}
\caption{The states $v$ and $w$. Solid lines carry the $\tau$ particle label and
dotted lines the trivial label.} \label{fig_states}
\end{figure}

We discuss the spectrum of $H_{qg}^0$ first.  $H_{qg}^0$ is positive
semi-definite and its ground state manifold consists of the states
$\psi$ with $\langle \psi | H_{qg}^0 | \psi \rangle = 0$.  Such a wave
function $\psi$ is completely determined via the $F$-symbols by its
restriction to a sample net $N^0$, e.g. a honeycomb.  (Importantly, $\psi$
is not {\em over} determined (frustrated) since the $F$-symbol
satisfies the famous  pentagon equations.)  The ground state manifold
may be classified according to the number of magnetic particles\footnote{In notation of \cite{FFNWW} $m=\tau\otimes\bar\tau$.} $m$ (of which, in our example system, there is only one
type).  Since we have only imposed fusion and $F$-moves there is no
energy penalty for $m$ charges on plaquets, provided that, unlike in the Levin-Wen model, they are allowed to roam ergodically according to
the moves ($F$) which link adjacent nets.  The magnetic charges on
$N^0$ can return arbitrarily permuted, so the only zero energy
(unfrustrated) states with $j$-magnetic charges, $j \geq 2$, are the
ones that have equal amplitude for all positions of the $j$ charges
(on all $n$-vertex nets). In contrast to the LW model magnetic charges are localized gapped excitations, the $F$-moves here delocalize them and lead to a gapless continuum of magnetic charges above the ground state. In addition, there is also a continuum of gapless ``gravity waves,'' or
phase oscillation across the (not very tightly bound) graph
$\mathcal{N}_n$ (see section \ref{Appendix:OuterPlanar}). This is analogous to coexisting gapless magnon and phonon excitations in a quantum magnet.
$\mathcal{N}_n$ is regarded as an abstract graph with the
triangulations $\Delta_n$ (or nets $N$) as vertices and edges given by the move
shown in Fig.~\ref{fig_move}. The appropriate vertex weighting of
$\mathcal{N}_n$ (see \ref{Appendix:Spectrum}) is uniform until the
$\delta D$ term is added.  As explained in 1 (``Introduction") and
\ref{subsection_liquid_case} (``Liquid Case'') below it is believed
that $\lambda_1(\mathcal{N}_n) \approx \frac{1}{n}$.

Adding the term $\delta D$ induces frustration which causes the ground
state wave function $\psi_0$ to concentrate near the original
honeycomb states.  When $\delta > \text{const}\cdot \gamma$, some
$\text{const} \approx 4$ and $\gamma$ the energy scale of the $F$-symbol,
kinetic considerations are overwhelmed and there will be a phase
transition to exponentially small fluctuations around the honeycomb
configurations. This concentration alters the weights 
(see \ref{Appendix:Spectrum}) on the vertex set $V(\mathcal{N}_n)$ 
and is expected to gap out the gravity waves.

\begin{figure}[htpb]
\labellist \small\hair 2pt
  \pinlabel $5$ at 390 292
  \pinlabel $5$ at 200 292
  \pinlabel $7$ at 290 210
  \pinlabel $7$ at 290 388
  \pinlabel $\text{Second $I-H$ move here.}$ at 800 210
  \pinlabel $\text{$\xleftarrow{\hspace*{1.1cm}}$}$ at 465 210
  
\endlabellist
\centering
\includegraphics[scale=.2]{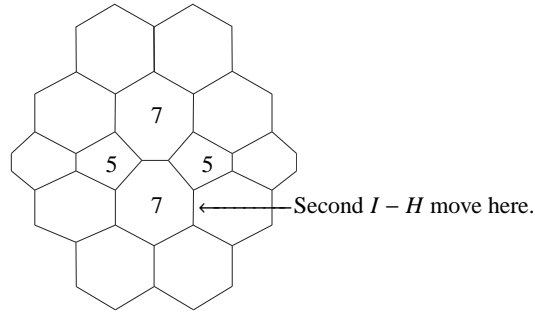}
\caption{One $I-H$ move creates a pair of $(5-$gon, $7-$gon$)$ pairs costing energy $\delta$.} \label{fig_HD}
\end{figure}

\begin{figure}[htpb]
\labellist \small\hair 2pt
  \pinlabel $5$ at 460 175
  \pinlabel $5$ at 200 280
  \pinlabel $7$ at 360 105
  \pinlabel $7$ at 290 375
  
\endlabellist
\centering
\includegraphics[scale=.2]{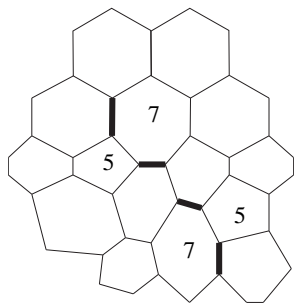}
\caption{A second $I-H$ move separates the two pairs. A third $I-H$ move along any of the four bold bonds (Figure 3.4) causes a $(5,7)$ pair to further propagate. Delocalizing the defect at $\delta=4\gamma$ in lead order in perturbation theory in $\gamma/\delta$.} \label{fig_HD2}
\end{figure}

Treating a pair of $(5,7)$-gons as the fundamental excitation with cost $\delta$, we see that the hexagonal crystal melts (at first order in perturbation theory) for ``kinetic energy" $\gamma$ associated with the $F$-move satisfying $\gamma>\frac{\delta}{4}$.

Now consider a virtual excitation (of energy cost $=\lambda$) which pulls an
electric pair (say $(\tau \otimes 1, \tau \otimes 1)$) out of the
vacuum.  Because of the {\em nontrivial} mutual statistics between the
magnetic ($\tau \otimes \tau$) and electric ($\tau \otimes 1$)
excitations, a frustration arises which increases the cost of the
electric pair $\psi_j^{e,e^*}$ in the presence of $j$ magnetic
particles.  For small $j$ the effect is roughly linear: 
\begin{equation}
\langle
\psi_j^{e,e^*} | H_{qg}^0 | \psi_j^{e,e^*} \rangle - \langle
\psi_0^{e,e^*} | H_{qg}^0 | \psi_0^{e,e^*} \rangle \approx j \alpha
\end{equation}
for some $\alpha > 0$ and where we have set $\langle \psi_0^{e,e^*} |
H_{qg}^0 | \psi_0^{e,e^*} \rangle =2\epsilon$.

Here $\alpha = \gamma/54$ is the energy scale $\gamma$ of the
$F$-symbol constraint divided by the number of $F$-moves required to
take one plaquette $B$ of $N_0$ around a neighbor $A$ and across an
``electric string'' (see Fig.~\ref{fig_ih_moves}). This $1/n$ scaling
of $\alpha$ mirrors that of the ground state energy of a one dimensional ferromagnet on a system
of length $n$ with twisted boundary conditions.

\begin{figure}[htpb]
\labellist \small\hair 2pt
  \pinlabel $A$ at 75 142
  \pinlabel $B$ at 75 85
  \pinlabel $1$ at 261 170
  \pinlabel $2$ at 310 143
  \pinlabel $3$ at 310 87
  \pinlabel $4$ at 261 58
  \pinlabel $5$ at 213 84
  \pinlabel $6$ at 204.5 98
  \pinlabel $7$ at 201 114
  \pinlabel $8$ at 204.5 130
  \pinlabel $9$ at 213 144
  \pinlabel $B$ at 215 172
  \pinlabel $A$ at 215 190
  \pinlabel $B'$ at 233 181
  \pinlabel $A$ at 288 191
  \pinlabel $B$ at 288 172
  \pinlabel $B$ at 286 66
  \pinlabel $B$ at 216 66
  \pinlabel $B$ at 323 129
  
\endlabellist
\centering
\includegraphics[scale=1]{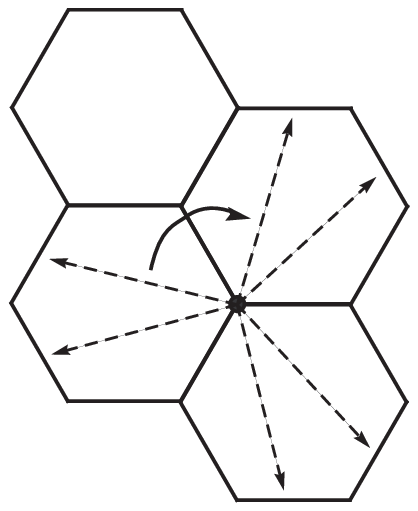}
\hspace{1cm}
\includegraphics[scale=0.3]{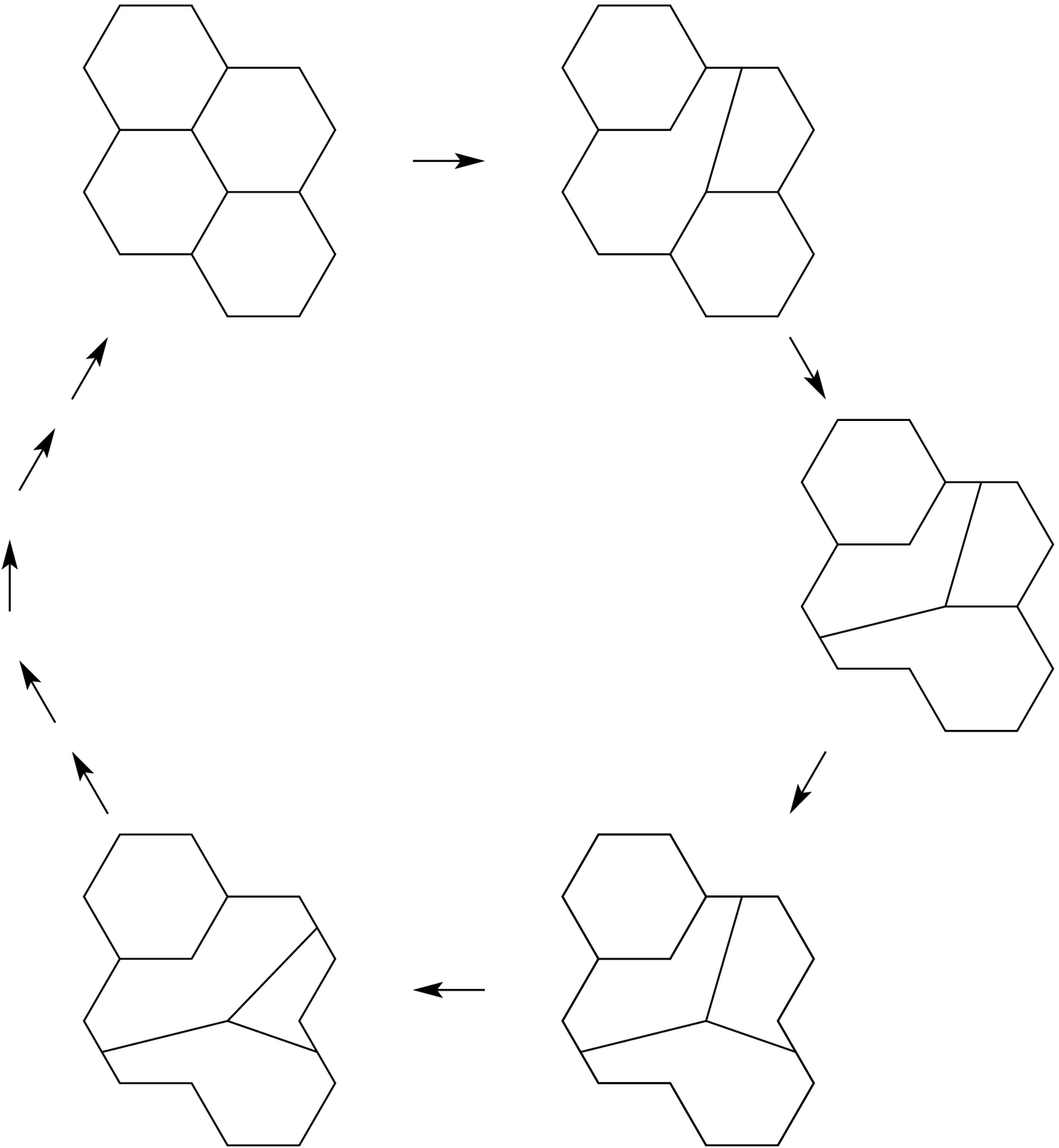}
\caption{9 $I$-$H$-moves (or $F$-moves at the Hamiltonian level)
  rotate the three bonds meeting the ``dot'' $+120^\circ$.  This
  rotates hexagon $B$ $-60^\circ$ around its neighbor $A$ from $B$ to $B'$.  54 moves
  complete the circuit.} \label{fig_ih_moves}
\end{figure}

The splitting which separates the ``true'' $j=0$ vacuums now show up at
56th order in perturbation theory.  ($56 = 54 + 2$, the 54 counts the
steps in Fig.~\ref{fig_ih_moves} to move $B$ around $A$ and the 2
comes from first creating then removing the ``electric'' pair.)  The
true vacuum has its energy lowered schematically\footnote{Slightly
  more accurately by $(\text{combinatorial factors)} \cdot
  \lambda^2\gamma^{54}(\delta+2\epsilon)^{-48}(2\epsilon)^{-7}$.} by
$\displaystyle \frac{\gamma^{56}}{(\delta + 2\epsilon)^{55}}$, whereas
the $j$ vacuums have an energy reduction of $\displaystyle
\frac{\gamma^{56}}{(\delta + 2\epsilon + j\alpha)^{55}}$.  These
numbers are each $\approx 55$th powers of a small and a somewhat
smaller number, respectively.  One may say that perturbation theory
predicts, in some regime, a definite splitting off of the true vacuum
which, although vanishingly slight, is constant in system size.  Thus
in summary, the timid ``crystalline'' off lattice approach succeeds in
principle but may be judged physically useless, because the gap will
be tiny.

\subsection{Liquid Case}\label{subsection_liquid_case}

Let us now move to the other extreme and drop the $\delta D$ term by
setting $\delta = 0$.  Now our kets are over a liquid of ``lattices''
or ``nets'' familiar in the quantum gravity literature.  Let us
summarize what is known about the statistics of these nets through
theoretical and numerical study (see \ref{Appendix:OuterPlanar}, \ref{Appendix:Numerics}, \cite{AJL}, and \cite{AR}).  Given uniform weight, a weight proportional
to total Gauss (= scalar) curvature, a topological quantity in
dimension 2, or any other known local weight which does not enforce a
``crystal,'' the geometry is cactus-like, with many budding or
``baby'' universes.  These correspond to Cheeger constant $\approx
\frac{\log n}{n}$ (using the combinatorial weights), $n =
\#$triangles.  That is, bottlenecks of size $\log n$ are common.  In
fact, the probability of a bottleneck of constant size is $\approx
n^{-\frac{1}{2}}$, i.e.\ only algebraically small.  In a related vein,
studying mixing times suggests (see \ref{Appendix:Numerics}) that
$\lambda$ ($= \lambda_1$ of $L$, see \ref{Appendix:Spectrum})
decays as approximately $n^{-1.75}$ when the nets are weighted
uniformly. We find a similar exponent  for the related case of
multi-loop rather than net dynamics.

This seems to present us with two problems:
\begin{enumerate}
\item Gapless gravity waves
\item Loss of a length scale
\end{enumerate}

The first turns out, by itself, not to be a serious problem. It is
actually quite interesting to have a simple mathematical model
which manifests gapless modes living side by side with protected
topological degrees of freedom.  In the context of FQH states, if the model is taken to
be sufficiently comprehensive to include lattice ions, then surely
their phonons are also an example of this phenomenon.  On the other
hand, the loss of length scale is inherent in declaring kets to be
{\em isotopy classes} of labeled nets is a serious problem.  We no
longer know if a bond is long or short, straight or wiggly. We view
with concern the loss of combinatorial protection conveyed by a large
regular lattice.  Recall that on an $L \times L$ torus mixing of
topological sectors occurs via tunneling along a Wilson loop of length
$L$ and will be suppressed by a factor of $e^{-\text{const}\cdot L}$.
As noted in the introduction, if bottlenecks cause $L$ to be replaced by $\log L$, or
even worse a constant, then the exponential protection disappears. 

To deal with this problem (2), we introduce a fixed fine-scale lattice
on the surface $\Sigma$ (perhaps writing $\Sigma$ as an $L \times L$
torus) and regulate our nets to lie within this fine lattice.  The
nets $\mathcal{N}_n$ still are restricted to $n$ vertices,
$n=\text{small constant}(L^2)$, but now their detailed position in
$\Sigma$ is {\em pinned} as part of the data of a ket $|(\Delta, S)
\rangle$; $\Delta$ is regarded now as a specifically located, or
pinned, $n$-vertex net in $\Sigma$. We will need to impose something that acts like ``string
tension'' that prevents the net bonds from becoming too long as
measured in the underlying fine $L\times L$ grid. This prevents short essential loops and so avoids baby universes. As explained below, the bonds become ``virtual", only their end points are precisely located. String tension can be simulated by establishing a hard energy penalty term $\omega B$ in $H$,
which charges energy $\omega$ for net bonds longer than $\ell_{\rm max}$ grid
bonds (counted by the operator $B$). Alternatively, a harmonic string tension can be imposed.

Technically the simplest way to incorporate our pinning and string
tension terms is to alter the basic Hilbert space on which the
Hamiltonian is defined.  Begin with a fine lattice of sites on the
surface (such as a torus) and as kets take all pairs: (a bond indexed
by two sites no more than $\ell_{\rm max}$ steps apart and thought of as
joining the sites, a label on the bond). Note that the precise physical
placement of the bond is \textit{not} chosen to be part of the data defining a ket.\footnote{This simplifies detailed balance for the $I$-$H$-moves.}
One may say that the string net is "imbeddable" --- according to certain rules --- 
but not "imbedded." The bonds at this level are "virtual." The labeling just
mentioned is from the appropriate set of quantum group
representations –-- as is usual --– \{1,$\tau$\}. The fusion
constraints now specify that the virtual bonds first form a
trivalent string net and  second that the three labels at any
juncture obey the algebraic fusion rules appropriate to the system of
quantum group representations being used.  An additional ``isotopy'' terms shifts the location of a vertex within the underlying lattice, provided all distance constraints are satisfied. The $F$ symbol
applies to recoupling virtual bonds.

Let us explain why equally weighted pinned nets are gapless
under these local moves. The situation is only a slightly more
global version of ``the space of all arcs transversing a rectangle''
$=: X$.  A typical arc will be nearly dense -- it will come within a
constant distance of a positive fraction of lattice points.
To define a bottleneck or ``Cheeger cut'' on this space of arcs,
consider the mid point $m$, in terms of arc length, of every arc.
Let $U(L)$ be the set of arcs for which $m$ lies in the upper(lower)
half of $X$. The ``cut'' is $U \bigcap L$.  Since the probability
density if nearly uniform for $m$ in $X$, the Cheeger constant
satisfies $k \preccurlyeq \frac{1}{L}$. Thus by
\ref{Appendix:Spectrum} $\frac{2}{L}\succcurlyeq\lambda$. The nets
will still be gapless after pinning.

Pinning the net restores the $e^{-\text{const}. L/a}$ scaling for
tunneling of quasi-particles and hence topological protection.  Even
if the net is thin (in the $y$-direction) as in Fig.~\ref{fig_thin_net}, order $L/{\ell_{\rm max}}$ isotopy moves are required to move
an excitation around an essential loop and so operate on the ground state manifold.
\begin{figure}[htpb]
\labellist \small\hair 2pt

  \pinlabel $\text{excitation path}$ at 1080 500
  \pinlabel $L\text{-moves}$ at 1195 695

\endlabellist
\centering
\includegraphics[scale=.18]{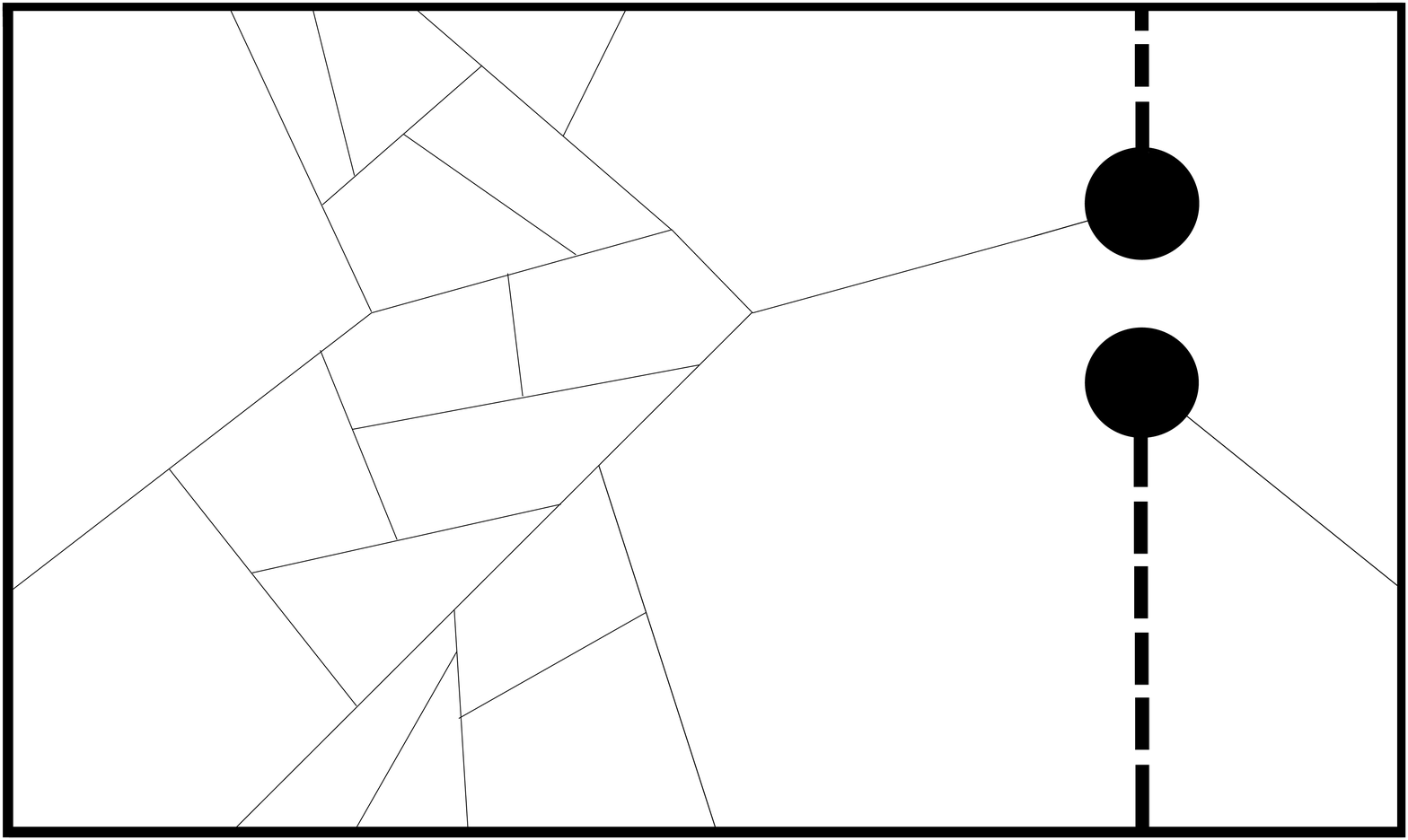}
\caption{A net with a thin part.} \label{fig_thin_net}
\end{figure}

We will now argue that $H_{qg} = H_{qg}^{0+} + \omega B$,
in the pinned context, supports achiral topological phases such as Dfib
and the toric code similar to $H_{LW}$, the Levin-Wen Hamiltonian, yet
coexisting with gapless gravity waves.
The $+$ superscript in $H_{qg}^{0+}$ indicates that we have added one
more ``between fibers" term to our
basic Hamiltonian $H_{qg}^{0}$, := ``virtual" fusion and ``virtual" $F$-terms. This term raises or lowers the number
of vertices of the net by 2. The Hilbert space is constrained now to
have a maximum of $n$ vertices per net rather than exactly $n$
vertices. The new term introduces (removes) a ``bubble'' into a virtual edge:
$$
\parbox{1.31cm}{\scalebox{0.3}{\includegraphics{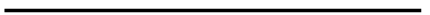}}}
\quad\scalebox{1.3}{$\leadsto$}\quad
\parbox{1.53cm}{\scalebox{0.35}{\includegraphics{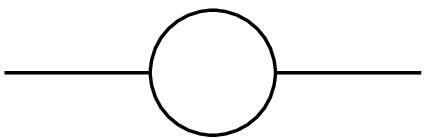}}}
$$
with any allowed effect on labels. Rather than write a general formula
as before, we give as examples the toric code and Dfib cases instead
\begin{eqnarray}
\text{toric code:}\ \ 
\parbox{1.31cm}{\scalebox{0.3}{\includegraphics{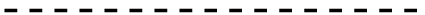}}}
&\scalebox{1.3}{$\leadsto$}&
\frac{1}{\sqrt{2}} \left(
\parbox{1.53cm}{\scalebox{0.35}{\includegraphics{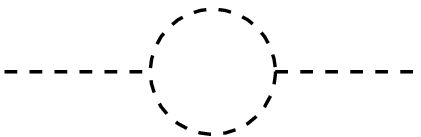}}}
+
\parbox{1.53cm}{\scalebox{0.35}{\includegraphics{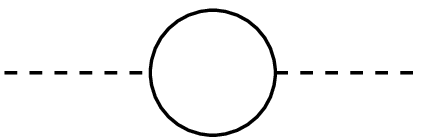}}}
\right) \nonumber \\
\parbox{1.31cm}{\scalebox{0.3}{\includegraphics{line1}}}
&\scalebox{1.3}{$\leadsto$}&
\frac{1}{\sqrt{2}} \left(
\parbox{1.53cm}{\scalebox{0.35}{\includegraphics{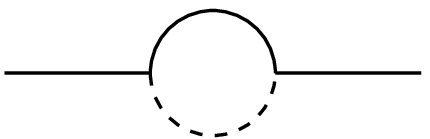}}}
+
\parbox{1.53cm}{\scalebox{0.35}{\includegraphics{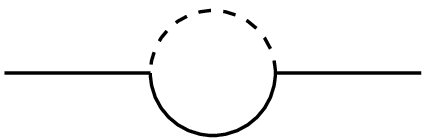}}}
\right) 
\label{fig_nterm1}
\\
\text{Dfib:}\ \ 
\parbox{1.31cm}{\scalebox{0.3}{\includegraphics{line2}}}
&\scalebox{1.3}{$\leadsto$}&
\frac{1}{\sqrt{1+\tau^{-2}}} \left(
\parbox{1.53cm}{\scalebox{0.35}{\includegraphics{bubble1}}}
+ \frac{1}{\tau}
\parbox{1.53cm}{\scalebox{0.35}{\includegraphics{bubble2}}}
\right) \nonumber \\
\parbox{1.31cm}{\scalebox{0.3}{\includegraphics{line1}}}
&\scalebox{1.3}{$\leadsto$}&
\frac{1}{\sqrt{2+\tau^{-1}}} \left(
\parbox{1.53cm}{\scalebox{0.35}{\includegraphics{bubble3}}}
+
\parbox{1.53cm}{\scalebox{0.35}{\includegraphics{bubble4}}}
+ \frac{1}{\sqrt{\tau}}
\parbox{1.53cm}{\scalebox{0.35}{\includegraphics{bubble5}}}
\right) \nonumber
\end{eqnarray}

Just as we did below Fig.~\ref{fig_states}, the relations of
\eqref{fig_nterm1} are easily converted into projector of the form:
$$
 (1-|v\rangle\langle v|) \quad\text{and}\quad (1-|w\rangle\langle w|),
$$
where (in the Dfib case)
\begin{eqnarray}
 v_{unnormalized} &=& 1
   \parbox{1.31cm}{\scalebox{0.3}{\includegraphics{line2}}}
  -\frac{1}{\sqrt{1+\tau^{-2}}}
   \parbox{1.53cm}{\scalebox{0.35}{\includegraphics{bubble1}}}
  -\frac{1}{\sqrt{1+\tau^{-2}}}
   \parbox{1.53cm}{\scalebox{0.35}{\includegraphics{bubble2}}}
 \nonumber \\
 w_{unnormalized} &=& 1
   \parbox{1.31cm}{\scalebox{0.3}{\includegraphics{line1}}}
  -\frac{1}{\sqrt{2+\tau^{-1}}} \left(
   \parbox{1.53cm}{\scalebox{0.35}{\includegraphics{bubble3}}}
   +\parbox{1.53cm}{\scalebox{0.35}{\includegraphics{bubble4}}} \right)
  -\frac{1}{\sqrt{2\tau + 1}}
   \parbox{1.53cm}{\scalebox{0.35}{\includegraphics{bubble5}}}
 \nonumber
\end{eqnarray}
These projectors are the new term in $H_{qg}^{0+}$.

Once a bubble has been introduced in a bond, a succession of 6 (or actually
degree of face) many $F$-moves inflates the bubble, carries it around the
face and then collapses it back to its initial state
\begin{figure}[h]
\labellist \small\hair 2pt
  \pinlabel $F$ at 170 350
  \pinlabel $F$ at 368 350
  \pinlabel $F$ at 459 233
  \pinlabel $F$ at 368 118
  \pinlabel $F$ at 170 118
  \pinlabel $F$ at 81 233
\endlabellist
\centering
\includegraphics[scale=0.45]{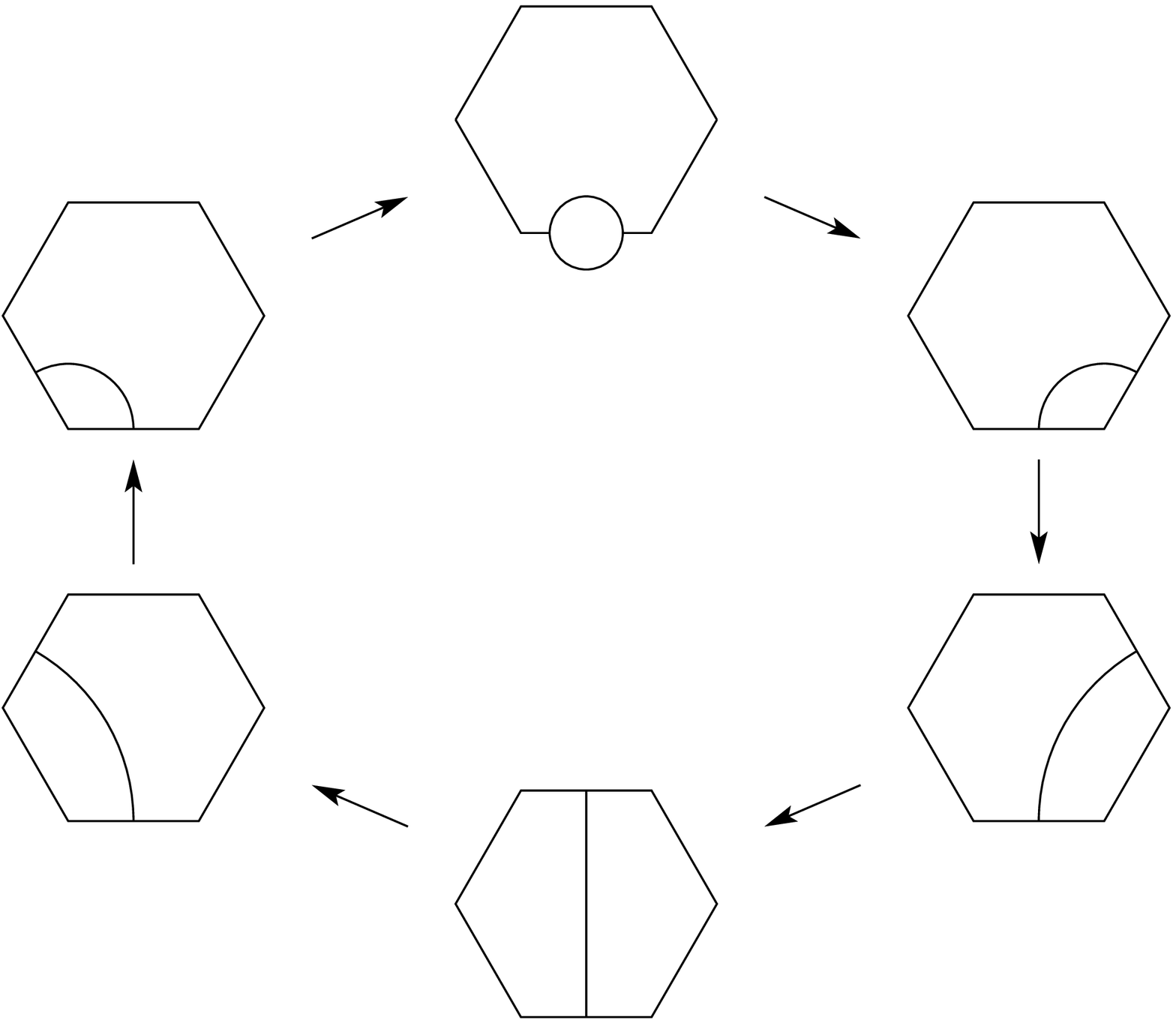}
\label{fig_fmoves}
\end{figure}

This process is easily seen to enforce the LW plaquette condition: the flux through
each plaquette is trivial. The question is how strongly it is enforced. Note that
there is no excitation present in Fig.~\ref{fig_fmoves}, rather we show
a circular
family of ``horizontal'' terms relating one fiber to another, returning
finally to the original fiber. Consequently the cost of a violation of this
emergent ``no flux'' condition is not a high power of small number but rather
proportional to the reciprocal of the number of horizontal terms in the loop.
Again think of a one dimensional ferromagnet with twisted boundary conditions.
The number of horizontal moves is measured by the
nets combinatorics, $6+2$, which is 6 for the trip around the hexagon, 2 for
creation and annihilation of the bubble. Thus despite a suppression by a factor of $\frac{1}{8}$, there is a substantial gap to magnetic excitations in terms of the bare energy
scale of the $F$-move and the ``new'' term in $H_{qg}^{0+}$. Of course the cost of an
electric excitation is precisely the bare energy of the ``vertical'', i.e.
within fiber, terms which enforce the fusion rules.

Although in a random net some plaquettes
will have more than six sides, the probability of $s$ sides decays exponentially
with $s$ (see \ref{Appendix:Numerics}). Thus a small portion ${\mathcal C}$ of 
the configuration space ${\mathcal N}$ with
$s$-gon plaquettes, $s$ large, has cusp-like geometry (as in the case of
hyperbolic geometry) and supports neither small Cheeger cuts nor low lying
eigen functions. It follows that although a magnetic excitation may be cheap
over ${\mathcal C}$ there is no efficient way to tapper off the amplitude
towards zero on ${\mathcal N}-{\mathcal C}$ where magnetic excitations are
expensive. The conclusion is that our analysis for $s\approx 6$ is, in fact,
general and shows a gap to magnetic excitations across all the configuration
space ${\mathcal N}$.

A detailed comparison of $H_{qg}$ to the exactly solved Levin-Wen Hamiltonian $H_{LW}$ is instructive.  The ground states (in the thermodynamic limit) are expected to be bijective.  The excitations of $H_{qg}$ are, in contrast to $H_{LW}$, mobile.  To build point-like, confined excitations ``wave packets'' will need to be formed.  Combinatorial recoupling arguments show that if such packets are confined in potential wells and braided, the LW (i.e. Jones) braid representation will be exactly realized (in the strong confinement limit).  Thus, we may expect that the entire topological structure, the TQFT, represented by $H_{LW}$ is recaptured by $H_{qg}$.  It is true that braiding will excite gapless gravity waves, but these are visibly non-interacting with the topological information contained in the combinatorics of labeled nets and their recoupling rules.

We would like to explain more fully this remarkable property of the liquid phase. This is the  rigidity of topological information maintained in defiance, so to speak, of the gapless gravity waves which propagate  about.  To do this let us speak metaphorically of the underlying space $X_n$ of (unlabelled) configurations as a ``chain''. This is a reasonable picture since our spectral studies show that the low eigen values of the graph Laplacian have inverse power law scaling similar to the $1/n^{-2}$ scaling of a chain.  We may very roughly view the quasi-geometry of $X_n$ as a string of length $O(n)$. We should worry that very near the ground state energy we will have states whose topological characteristics ``rotate'' as we pass from one end of the chain to the other. Recall our two main exemplars: the toric code and Dfib. Both of these have a $4$ dimensional ground state Hilbert space (torus) spanned by the states $\ket{1}$, $\ket{2}$, $\ket{3}$, and $\ket{4}$. Imagine  a system state that is a family of topological ground states that rotates by $2\pi$ as we move across the length of the ``chain'' $X_n$ and so, on the torus triangulation at ``chain position'' $x$, $0\le x \le L$, we see  the ground state:
$
 \left(\cos(2\pi x/L)\ket{1} + \sin(2\pi x/L)\ket{2}\right).
$
Is such a system state a candidate for a low energy excitation  as $n = L$ approaches infinity? The answer is: ``No''.  To see this look at consecutive ``links'' in the chain, triangulations $\Delta_1$ and $\Delta_2$ with states $\Psi_1$ on $\Delta_1$ and $\Psi_2$ on $\Delta_2$. By the ``code property" of topological ground states (see \cite{MF}), $\Psi_1$ and $\Psi_2$ cannot differ by the application of a local operator. Passing between $\Psi_1$ and $\Psi_2$ will cost energy according to the $H_{qg}^{0+}$ term. In fact, this rigidity is quite robust. Up to the usual caveats about perturbations inducing exponentially fine energy splittings,  it is not possible to deform the ground state as one moves through the configuration space $X$. Since for us the configuration $x \in X$ is a dynamical variable, this is important.  If topological information is stored in this novel phase, when it is retrieved $X$ must  be sampled. $X$ will be sampled according to some distribution and the topological state over the sampled $x$ will then be probed by a quantum measurement.  The output distribution of our probe , when applied to a system ground state,  will be independent of the sampled $x \in X$, as desired.

Because of the pinning and string tension terms, the typical nets in this lattice model are qualitatively similar to the boundaries of Voronoy cells produced by Poisson distributed centers. We recommend this alternative model to the investigation of interested readers.

$H_{qg}$ is not a ``lattice Hamiltonian.''  In particular, it is not
defined on a ``tensor product'' Hilbert space (but rather a fiber-wise
direct sum of these, one for each net in $\mathcal{N}_n$).  Thus, it
is not precise to assert that $H_{qg}$ is ``$k$-body'' for any $k$,
but it is evidently quite simple. One may say that the flux
(plaquette) term of $H_{LW}$, which is 12-body, or more precisely a
6-parameter family of 6-body interactions, has been simulated by the
$F$-move, which in these terms is a 4-parameter family of {\em one}
body interactions.  But to achieve this, we have resorted to a context
where the lattice itself fluctuates and must be counted among the
dynamic variables.


\subsection{CDT: A (1+1)D Home for Anyons}
Causal dynamical triangulation (CDT) builds layered (1+1)-dimensional ``space-times" by randomly constructing Lorentzian strips as below,
\begin{figure}[htpb]
\labellist \small\hair 2pt

  \pinlabel $\text{$\uparrow$ pseudo-time = $\theta$}$ at 1080 200

\endlabellist
\centering
\includegraphics[scale=.18]{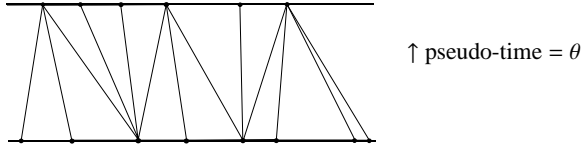}
\caption{One layer of space-time in CDT.} \label{cdt}
\end{figure}
where all horizontal bonds have $\rm{length^2} = 1$, and all other bonds have $\rm{length^2} = -a\le{0}$, where a is a fixed constant, see Fig. \ref{cdt}. The action $S = \int{(\frac{-R}{G}+2\Lambda)}$ is the integrated Regge scalar curvature (appropriate to Lorentz space) plus a suitable cosmological constant. 

We take pseudo-time $\theta$, periodic. It is known \cite{AR} that for suitably chosen $a\approx{.7}$, there is a liquid regime of roughly flat Lorentzian geometries on $\mathbf{S}^1\times{\mathbf{R}}$. This finding offers a remarkable solution in cosmology to the persistent problem of baby universes. We recommend for further study the possibility of importing this innovation into condensed matter physics. The same action can be used to define a density on string nets supported neat flat geometry, and so preserve topological protection in anyonic models based on these geometries.

To visualize the braiding of anyons, described next, picture  $\mathbf{S}^1\times{\mathbf{R}}$ as $\mathbf{R}^2 - \{0\}$ via $(\theta, \rho) \mapsto (e^{\rho},\theta)$. The geometry of $\mathbf{R}^2 - \{0\}$, i.e. its Lorentzian triangulation, is explicitly among the dynamic variable, but in addition the bonds of the triangulation are labeled from a (quantum group) label set, which in this paper is \{1,$\tau$\}. ``Singularities" of the labeling (as explained in detail in Fidkowski {\it et al.}, see \cite{FFNWW}) - annular regions where the state cannot be extended over the disk to a vacuum state - are the ``quasi-particles", or anyons, of Dfib. So a loop of states is a loop of annular Lorentz geometries together with anyons.

One might wonder when importing a 2-dimensional net model from a (1+1)-dimen\-sional quantum gravity model, whether ``causality'' in the model will prevent braiding. If information is not allowed to flow backwards in ``time'' (pseudo-time = $\theta$) we might be unable to braid anyons since they can only move forward in the $\theta$ direction. 
 This may appear to limit their possible braidings, but in actuality it does not. A full counter-clockwise 2$\pi$-turn generates the center $C$ of each braid group, $B_n, n\ge{3}$. Thus as the anyons move radially back and forth, their overall progression in the pseudo-time = $\theta$ direction only multiplies the braid by a central element - corresponding in each irreducible sector of the Jones representation to an irrelevant overall phase. Consequently, the causality of the construction (in pseudo-time = $\theta$) does not restrict the image of the braid representations and building topological phases is not hampered by a causality constraint.
\begin{figure}[htpb]
\labellist \small\hair 2pt
  \pinlabel $\text{$\leftarrow t = \rm{time}$}$ at 362 142
  \pinlabel $\text{$\theta=\rm{pseudotime}$}$ at 410 30
  \pinlabel $\text{$\leftarrow$}$ at 300 30
\endlabellist
\centering
\includegraphics[scale=.3]{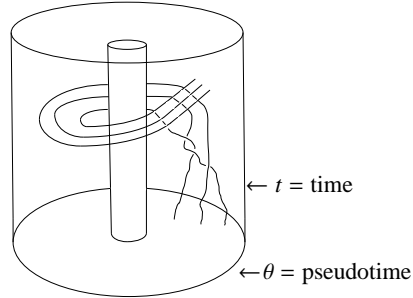}
\caption{Anyon trajectories superimposed on a central twist.} \label{fig_twist}
\end{figure}

\section{Conclusions}
\label{sec:conclusions}

Because nuclei are heavy lattices in condensed matter are generally thought of as fixed or classical degrees of freedom. It is true that in chemistry superpositions of isomers can be important but generally  the lattice is not takes to be a dynamical quantum mechanical variable. In this paper this is exactly what we do. We have not forgotten that nuclei are heavy; we imagine that there may be models in which some electron degrees of freedom define a lattice and others decorate it and that both should be allowed to fluctuate. This paper is not about a specific model of this kind but rather a preliminary survey of the hazards and prospects that await us ``off lattice''. Our focus has been entirely on building topological phases, although off lattice models may have wider applicability.

We have seen that the chief hazard is uncontrolled fluctuations in the now random geometry of the lattice, a phenomenon colorfully called ``baby universe'' in the quantum gravity literature. These fluctuations threaten to destroy the ratio of scales $L/a$, system length / lattice spacing, on which topological protection, ${\rm error} \sim \exp(-{\rm const.}L/a)$ depends. We have also seen our three approaches (methods of birth control) to retaining topological protection. Briefly they were: 1) a minimally fluctuating {\em crystalline phase}, unsatisfactory due to a vanishingly small excitation gap $\Delta$ for the topological phase. 2) Pinning the fluctuating lattice to a background (lattice or continuum). This seems to work but sacrifices some of the simplicity we hoped to find in off lattice models. 3) Causal Dynamical Triangulations (CDT). Here we borrow the solution (as well as the problem) from the quantum gravity community.

In order to evaluate the impact of geometry fluctuations, analytical (\ref{Appendix:C}) and numerical (\ref{Appendix:Numerics}) work was done on the statistics of loop gases and string nets. In \ref{Appendix:Spectrum} and \ref{Appendix:OuterPlanar} Cheeger's theory relating the geometry of manifolds to their vibrational modes is adapted to the infinite dimensional context to construct estimates of the spectrum of Hamiltonians derived from our models. In particular we develop a method for constructing upper bounds to the spectral gap of a Hamiltonian $H$ by Monte Carlo studies (of both the gap and estimates for Cheeger's constant) on a weighted graph $G$ derived from $H$. We find (method 2 and appendices) regimes in which topological information is protected while coexisting with gapless ``vibrational'' modes across the space of geometries. Such results encourage us to regard the off-lattice approach as viable and worthy of continued investigation.


\section*{Acknowledgments}
We thank the Aspen Center for Physics for hospitality. This work was supported, in part, by the Swiss National Science Foundation and 
the Swiss HP$^2$C initiative.


\appendix
\renewcommand{\thetheorem}{\Alph{section}.\arabic{theorem}}

\section{Spectrum of Graph Laplacians and Other Local Hamiltonians}
\label{Appendix:Spectrum}

We start with the following data: a finite dimensional Hilbert space
$\mcH$ spanned by a preferred set of basis kets $\{\ket{i}\}$, a
Hamiltonian $H:\mcH\rightarrow\mcH$, and a known ground state wave
function $\psi_0 = \sum_{i}a_i\ket{i}$ for $H$. We construct from the
data a weighted graph $G$ whose Laplacian
$\mathcal{L}:\mcH\rightarrow\mcH$ is {\em easy} to study
numerically. 
We focus on the first eigenvalue $\lambda_1(\mathcal{L})$ and 
 verify the Cheeger inequalities for $G$: 
\begin{equation}
  \label{eq:cheeger_inequality}
  2h_G \geq \lambda_1(\mathcal{L}) \geq \frac{h_G^2}{2} \,.
\end{equation}
If it should happen that $\lambda_1(\mathcal{L})\rightarrow 0$ (as
a scaling limit is taken), we may also conclude that the original
$H$ is gapless (in the same limit.) This is because a small $h_G$
means a {\em neck} in the set of kets $\{\ket{i}\}$ with little
coupling from $cH$ (where $c$ is a positive running factor, perhaps
proportional to $(\textit{system size})^{-1}$, arising in the proof)
from left to right sides of the neck. The trial wave of the form
$\psi_1 = b_1\psi_0^{\text{left}}-b_2\psi_0^{\text{right}}
(b_1,b_2>0)$ will be orthogonal to $\psi_0$ and satisfy
$c(\bra{\psi_1}H\ket{\psi_1}-\bra{\psi_0}H\ket{\psi_0})\rightarrow
0$. If this rate of convergence to zeros is faster than $c$, it will
imply $H$ gapless. Conversely, if we know a quantum mechanical system
is gapped (e.g. the Levin-Wen model \cite{LW}), it will imply a tightly
connected geometry for the appropriate weighted graph of string net
configurations. 

The reader may wonder what good is a method for studyinig the ``gap''
if it requires knowledge of the ground state $\psi_0$. In the  case of
a topological phase, one may begin with a formula for the ground state
wave function (given by $d$-isotopy, or a chromatic evaluation) and
from this, attempt to build a gapped Hamiltonian. This appendix
provides ammunition for shooting down such Hamiltonians (i.e. showing
them gapless) as in  \cite{FNS05}. 

Here is the construction. The vertices of $G$ are simply the index set
$\{i\}$ for the kets of $H$. We set the edge weight $w_{ij} =
c|H_{ij}|$ provided $i\neq j$, for a positive constant $c$ yet to be
determined. The {\em wave function} weight at  $i$ is
$d_i=|a_i|^2$. Write $d_i = c\sum_{j\neq i}|H_{ij}|+w_{ii}$ where
$c$ is the largest ($i$ independent) constant allowing all $w_{ii}\geq
0$. This fixes $c$ and the $w_{ii}$. The $cw_{ii}$ are to be thought
of as weights on loops at $i$. Thus, $G$ has edge weights $cw_{ij}$
and vertex weights $d_i$. Following Chung \cite{Chung}, there are
``unweighted'' and ``weighted'' operators $L$ and $\mathcal{L}$, both 
of which are symmetric and have a
zero mode:
\begin{eqnarray}
  \label{eq:lap_def_1}
  L_{ij} &=&
  \begin{cases}
    d_i-w_{ii} & \text{if }i=j\\
    -w_{ij} & \text{otherwise}
  \end{cases}\\
  \mathcal{L} &=& T^{-1/2}LT^{-1/2},
\end{eqnarray}
where $T$ is the diagonal matrix with $T_{ii} = d_i$. Explicitly,
\begin{equation}
  \label{eq:lap_def_2}
  \mathcal{L} =
  \begin{cases}
    1-\frac{w_{ii}}{d_i} & \text{if }i=j\\
    \frac{-w_{ij}}{\sqrt{d_id_j}} & \text{otherwise} \,.
  \end{cases}
\end{equation}

\begin{example}
  As a sanity check on the method, we check explicitly, in a toy model,
  that the choice of basis only affects the spectrum
  slightly (by a factor of 2.) We explore in the simplest case the
  dependence of the spectrum of $\mathcal{L}$ on the choice of bases
  for $\mcH$. Let $\mcH = \C^2$ and
  \begin{displaymath}
    H_{\theta} =
    \begin{pmatrix}
      \cos\theta & -\sin\theta\\
      -\sin\theta & -\cos\theta
    \end{pmatrix}
  \end{displaymath}
  $H_{\theta}$ annihilates $
  \begin{pmatrix}
    \sin{\theta}/{2}\\\cos{\theta}/{2}
  \end{pmatrix},
  $ so we need to solve, with $c$ as large as possible, the equations
  \begin{eqnarray*}
    d_1 &=& \sin^2\frac{\theta}{2} = c |\sin\theta|+w_{11}\\
    d_2&=& \cos^2\frac{\theta}{2} = c |\sin\theta|+w_{22}
  \end{eqnarray*}
  Recalling that $\sin\theta =
  2\cos\frac{\theta}{2}\sin\frac{\theta}{2},$ the natural (and
  correct) guess in the interval $0\le\theta\le\pi/2$ is $c = \frac{1}{2}\tan\frac{\theta}{2}$. This yields: 
  \begin{displaymath}
    w_{11} = 0\hspace{1cm}\text{and}\hspace{1cm}w_{22} = \cos\theta
  \end{displaymath}
  Substituting, we find:
  \begin{displaymath}
    \mathcal{L} =
    \begin{pmatrix}
      1 & \tan\frac{\theta}{2}\\
      \tan\frac{\theta}{2} & \tan^2\frac{\theta}{2}
    \end{pmatrix}
  \end{displaymath}
  Solving for the eigenvalues we obtain $\lambda_1=1/\cos^2\frac{\theta}{2}$, which in the considered interval $0\le\theta\le\pi/2$ varies only between 1 and 2. In the other intervals we get similar results.

\end{example}

With this small check of quasi-invariance of $\lambda_1$ under basis change,
we derive the Cheeger inequalities in the relevant weighted graph
setting, closely following \cite{Chung}. 

$\mathcal{L}$ acts on functions of $G$ by (left) multiplication.  The
lowest eigenvalue is $\lambda_0 = 0$ with eigenfunction $f_0(i) =
\sqrt{d_i}$.  When $G$ is connected, $\lambda_0$ is non-degenerate.
We will be concerned with the next eigenvalue $\displaystyle \lambda_1
= \inf_f \frac{\langle f \mathcal{L} f \rangle}{\| f \|_2}$ computed
with respect to $\mu$, the measure or vertex weighting with
weight($i$) = $d_i$, for $f$ orthogonal to constants.  We denote
$\lambda_1$ by $\lambda_G$ or just $\lambda$ and use $f$ for its
eigenfunction. 

We define the Cheeger constant $h$: $$h = \min_{S \subset V}
\frac{F_w(S, \bbar{S})}{\min(Vol(S), Vol(\bbar{S}))} \,,$$ where $S$
is an arbitrary subset of the vertex set $V(G)$, $\bbar{S}$ is $V
\setminus S$, $F_w$ denotes the weight of edges between $S$ and
$\bbar{S}$, $\displaystyle F_w = \sum_{i \in S, j \in \bbar{S}}
w_{ij}$.  Finally, $\displaystyle Vol(S) = \sum_{i \in S} d_i$,
$\displaystyle Vol(\bbar{S}) = \sum_{j \in \bbar{S}} d_j$. 

\begin{theorem}
\label{theorem_A1}
$2h \geq \lambda$
\end{theorem}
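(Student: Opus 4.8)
The plan is to run the easy half of the Cheeger argument, phrased in the weighted language of $\mathcal{L}=T^{-1/2}LT^{-1/2}$. First I would trade $\mathcal{L}$ for the combinatorial Laplacian $L$ via the substitution $g=T^{-1/2}f$. Since $f_0(i)=\sqrt{d_i}$, the constraint $f\perp f_0$ becomes $\sum_i d_i g(i)=0$, i.e.\ $g$ has mean zero against the vertex measure $\mu$ with $\mu(i)=d_i$, while $\langle f,\mathcal{L}f\rangle=\langle g,Lg\rangle$ and $\|f\|_2^2=\sum_i d_i g(i)^2$. Using $d_i-w_{ii}=\sum_{j\neq i}w_{ij}$, which holds by the very definition of $d_i$ and the maximal $c$, a one-line symmetrization gives $\langle g,Lg\rangle=\tfrac12\sum_{i\neq j}w_{ij}(g(i)-g(j))^2$. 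Hence
$$\lambda=\inf\Big\{\tfrac12\textstyle\sum_{i\neq j}w_{ij}(g(i)-g(j))^2\ \Big/\ \sum_i d_i g(i)^2\ :\ \sum_i d_i g(i)=0,\ g\neq 0\Big\}.$$

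Next I would use the elementary fact that for any non-constant $g_0$ one has $\lambda\le\big(\tfrac12\sum_{i\neq j}w_{ij}(g_0(i)-g_0(j))^2\big)\big/\min_{c\in\R}\sum_i d_i(g_0(i)-c)^2$, since $g_0-c$ is an admissible test function for the minimizing $c$ (the $\mu$-average of $g_0$) and the Dirichlet form is unchanged by adding a constant. The key step is then to feed in $g_0=\chi_S$, where $S\subset V$ realizes the Cheeger minimum up to $\varepsilon$, with $\mathrm{Vol}(S)\le\mathrm{Vol}(\bar S)$ WLOG. A direct computation gives numerator $=\sum_{i\in S,\,j\in\bar S}w_{ij}=F_w(S,\bar S)$ and $\min_c\sum_i d_i(\chi_S(i)-c)^2=\mathrm{Vol}(S)-\mathrm{Vol}(S)^2/\mathrm{Vol}(V)=\mathrm{Vol}(S)\,\mathrm{Vol}(\bar S)/\mathrm{Vol}(V)$. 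Since $\mathrm{Vol}(\bar S)/\mathrm{Vol}(V)\ge\tfrac12$, the denominator is at least $\tfrac12\min(\mathrm{Vol}(S),\mathrm{Vol}(\bar S))$, so $\lambda\le 2\,F_w(S,\bar S)/\min(\mathrm{Vol}(S),\mathrm{Vol}(\bar S))$; letting $\varepsilon\to0$ and taking the infimum over $S$ yields $\lambda\le 2h$.

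There is no serious obstacle here — this is the ``soft'' direction, whose entire content is the bookkeeping that makes $\chi_S$ a legitimate trial function together with the observation $\mathrm{Vol}(\bar S)\ge\tfrac12\mathrm{Vol}(V)$. The one place to be careful is verifying that the chosen normalization of the edge weights $w_{ij}$ and the loop weights $w_{ii}$ really does give $d_i-w_{ii}=\sum_{j\neq i}w_{ij}$, so that $L$ is a genuine weighted graph Laplacian with the constants as a zero mode; this is immediate from the defining relation $d_i=c\sum_{j\neq i}|H_{ij}|+w_{ii}$. The reverse bound $\lambda\ge h^2/2$, where the real work lies, is treated separately.
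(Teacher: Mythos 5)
Your proposal is correct and is essentially the paper's own argument: the paper tests the Rayleigh quotient with the function equal to $1/\mathrm{Vol}(S)$ on $S$ and $-1/\mathrm{Vol}(\bar S)$ on $\bar S$, which is exactly a scalar multiple of your mean-subtracted indicator $\chi_S - \mathrm{Vol}(S)/\mathrm{Vol}(V)$, and both arguments close with the same elementary bound ($1/a+1/b\le 2/\min(a,b)$ in the paper, $\mathrm{Vol}(\bar S)/\mathrm{Vol}(V)\ge\tfrac12$ in yours). The only cosmetic difference is that you carry out the reduction from $\mathcal{L}$ to the weighted Dirichlet form explicitly, which the paper leaves implicit in its appeal to the ``Rayleigh--Dirichlet integral.''
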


\begin{proof}
  Let $S$ achieve $h$ and set $a=Vol(S)$ and $b=Vol(\bbar{S})$. Define a
  ``trial'' eigenfunction: $$f_i = \left\{ \begin{array}{lc}
      \frac{1}{a}, & i \in S \\ -\frac{1}{b}, & i \in
      \bbar{S} \end{array}\right.$$ We have, from the Rayleigh-Dirichlet
  integral: 
  \begin{align*}
    \lambda &\leq F(S, \bbar{S})\frac{(\frac{1}{a} + \frac{1}{b})^2}{\frac{1}{a^2}a + \frac{1}{b^2}b} \\ &= F(S, \bbar{S}) \left( \frac{1}{a} + \frac{1}{b} \right) \\ &\leq \frac{2F(S,\bbar{S})}{\min(Vol(S),Vol(\bbar{S}))} \\ &= 2h
  \end{align*}
\end{proof}

\begin{theorem}
  \label{theorem_A2}
  $\lambda \geq \frac{h^2}{2}$.
\end{theorem}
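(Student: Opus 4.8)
The plan is to follow the classical Cheeger argument adapted to the weighted-graph setting of Chung. The starting point is the variational characterization $\lambda = \langle g\mathcal{L}g\rangle/\|g\|_\mu^2$ for the eigenfunction $g$ orthogonal to the constant $f_0(i)=\sqrt{d_i}$. First I would pass from $\mathcal{L}$ back to the unnormalized Laplacian $L$ by substituting $f = T^{-1/2}g$; then $\langle g\mathcal{L}g\rangle = \sum_{i\sim j}w_{ij}(f_i-f_j)^2$ and $\|g\|_\mu^2 = \sum_i d_i f_i^2$, so that
\begin{equation}
  \lambda = \frac{\sum_{i\sim j}w_{ij}(f_i-f_j)^2}{\sum_i d_i f_i^2},
\end{equation}
with $f$ satisfying $\sum_i d_i f_i = 0$ (equivalently $f\perp_\mu \mathbf{1}$). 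The goal is to produce, from this $f$, a level set $S$ with $F_w(S,\bbar S)/\min(\mathrm{Vol}(S),\mathrm{Vol}(\bbar S)) \le \sqrt{2\lambda}$.

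The key steps, in order: (1) Reduce to a one-sided problem. Replace $f$ by $f_+ = \max(f-m,0)$ where $m$ is a weighted median of $f$ (so that both $\{f>m\}$ and $\{f<m\}$ have volume at most half of $\mathrm{Vol}(V)$); a standard computation shows that if $g = f - m$ then $\langle$ numerator $\rangle$ can only decrease when restricting to the positive part, while $\sum_i d_i (f_+)_i^2 \le \sum_i d_i g_i^2 \le \sum_i d_i f_i^2$ using the median property and $\sum_i d_i f_i = 0$. Thus it suffices to bound the ``Rayleigh quotient'' of a nonnegative function $f_+$ supported on a set of volume $\le \mathrm{Vol}(V)/2$. (2) The Cauchy–Schwarz trick. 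Order the support so that $f_+$ takes values $v_1 \ge v_2 \ge \cdots \ge v_k > 0$ on vertices $1,\dots,k$, and write $S_t = \{i : f_+(i) > t\}$. Using the identity $(f_+)_i^2 - (f_+)_j^2 = \int_0^\infty \mathbf{1}[S_t \text{ separates } i,j]\,2t\,dt$ (telescoping over level sets), one gets
\begin{equation}
  \sum_{i\sim j} w_{ij}\,\bigl|(f_+)_i^2 - (f_+)_j^2\bigr| = \int_0^\infty 2t\, F_w(S_t,\bbar{S_t})\, dt \ge \bigl(\min_t \alpha(S_t)\bigr)\int_0^\infty 2t\,\mathrm{Vol}(S_t)\,dt,
\end{equation}
where $\alpha(S) := F_w(S,\bbar S)/\mathrm{Vol}(S) \ge h$ for sets of volume $\le \mathrm{Vol}(V)/2$, and $\int_0^\infty 2t\,\mathrm{Vol}(S_t)\,dt = \sum_i d_i (f_+)_i^2$. (3) Factor the left side by Cauchy–Schwarz: $|(f_+)_i^2-(f_+)_j^2| = |(f_+)_i-(f_+)_j|\cdot|(f_+)_i+(f_+)_j|$, so
\begin{equation}
  \sum_{i\sim j}w_{ij}\bigl|(f_+)_i^2-(f_+)_j^2\bigr| \le \Bigl(\sum_{i\sim j}w_{ij}\bigl((f_+)_i-(f_+)_j\bigr)^2\Bigr)^{1/2}\Bigl(\sum_{i\sim j}w_{ij}\bigl((f_+)_i+(f_+)_j\bigr)^2\Bigr)^{1/2},
\end{equation}
and bound the second factor by $(2\sum_i d_i(f_+)_i^2)^{1/2}$ (since $\sum_{i\sim j}w_{ij}((f_+)_i+(f_+)_j)^2 \le 2\sum_{i\sim j}w_{ij}((f_+)_i^2+(f_+)_j^2) \le 2\sum_i d_i (f_+)_i^2$, using $\sum_{j}w_{ij} \le d_i$, which holds because of the nonnegative loop weights $w_{ii}$). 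Combining (2) and (3) and cancelling $\sum_i d_i(f_+)_i^2$ gives $h \le \sqrt{2}\,\bigl(\sum_{i\sim j}w_{ij}((f_+)_i-(f_+)_j)^2 / \sum_i d_i(f_+)_i^2\bigr)^{1/2}$, and the bracketed quantity is at most $\lambda$ by step (1). Squaring yields $h^2 \le 2\lambda$, i.e. $\lambda \ge h^2/2$.

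The main obstacle is step (1): getting the median reduction to work cleanly in the weighted setting, in particular verifying that passing from $f$ to $f-m$ and then to $(f-m)_+$ does not inflate the Rayleigh quotient, and keeping careful track of which side has volume $\le \mathrm{Vol}(V)/2$ so that $\alpha(S_t) \ge h$ is legitimate for every level set that appears. The bookkeeping with the loop weights $w_{ii}$ (ensuring $\sum_{j\ne i} w_{ij} \le d_i$, which is exactly how $c$ and the $w_{ii}$ were chosen) is what makes the inequality $\sum_j w_{ij} \le d_i$ available and is the one place where the specific normalization of this paper — rather than the Markov-chain normalization — matters; I would flag that explicitly. Everything else is the textbook co-area/Cauchy–Schwarz manipulation.
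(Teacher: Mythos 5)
Your overall plan matches the paper's: restrict the eigenfunction to one sign, apply Cauchy--Schwarz to split $|f_i^2 - f_j^2|$, then telescope/co-area over level sets. Steps (2) and (3) of your proposal are essentially verbatim what appears in the paper's proof, and the observation that the loop weights guarantee $\sum_{j\neq i} w_{ij}\leq d_i$ is exactly the point the paper relies on in the Cauchy--Schwarz denominator bound.

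The genuine gap is in your step (1), which you yourself flag as the obstacle. The chain $\sum_i d_i(f_+)_i^2 \le \sum_i d_i g_i^2 \le \sum_i d_i f_i^2$ has its second inequality backwards: since $\sum_i d_i f_i = 0$, the median shift \emph{adds} $m^2\,\mathrm{Vol}(V)$ to the weighted $\ell^2$ norm, so $\sum_i d_i g_i^2 \ge \sum_i d_i f_i^2$. More importantly, even with corrected signs, decreasing \emph{both} the numerator and the denominator of the Rayleigh quotient leaves the ratio indeterminate; it cannot by itself give the needed bound $R_{E_+}(f_+)\le\lambda$. The paper closes this gap by invoking the eigenvector identity $Lf = \lambda T f$: multiplying $\sum_j w_{ij}(f_i-f_j)=\lambda d_i f_i$ by $f_i$ and summing over $i\in V_+$ gives $\lambda\sum_{i\in V_+} d_i f_i^2 = \sum_{i\in V_+} f_i\sum_j w_{ij}(f_i-f_j)$, and the pointwise estimate $f_i(f_i-f_j)\ge g_i(g_i-g_j)$ for $i\in V_+$, $g:=f\cdot\mathbf{1}_{V_+}$, produces the restricted Dirichlet form --- this is the paper's terse ``(cutting off some numerator terms).'' Note also that the paper dispenses with the median shift: it assumes WLOG $\sum_{f_i<0} d_i \ge \sum_{f_j>0} d_j$ (replacing $f$ by $-f$ if necessary), so the threshold is at zero and $V_+$ is automatically the lighter side, which is what legitimizes $\alpha(S_t)\ge h$ for the level sets. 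You should replace your median reduction by this eigenvector-identity argument; the rest of your proof then goes through as written.
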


\begin{proof}
  Let functions $f,k:V(G)\rightarrow\R$ be related by $k_i =  d_if_i.$
  Now the Rayleigh quotient:  
  \begin{eqnarray*}
    \faktor{\innerprod{k}{\mathcal{L}k}}{\innerprod{k}{k}} &=& \faktor{\innerprod{k}{T^{-1/2}LT^{-1/2}k}}{\innerprod{k}{k}}\\
    &=& \faktor{\innerprod{f}{Lf}}{\innerprod{T^{1/2}f}{T^{1/2}f}}\\
    &=& \faktor{\sum_{i\sim j}|f_i-f_j|^2}{\sum_if_i^2d_i}
  \end{eqnarray*}
  becomes $\lambda$ when minimized among $k_i$ orthogonal to $d_i$,
  equivalently by $f$, orthogonal to constants. We assume $f$ is such a
  minimum. Thus $\mathcal{L}k=\lambda k$. 
  
  We index the vertices $i$ of $G$ in $f$-increasing order, $f_i \leq
  f_{i+1}$, and without loss of generality assume $\displaystyle
  \sum_{f_i < 0} d_i \geq \sum_{f_j > 0} d_j$.  For each $i \in V$ let
  $\displaystyle c_i = \sum_{j \leq i < k} w_{jk}$, measures the
  $i$th ``cut'' between $S_i = \{j \leq i\}$ and $\bbar{S}$. Set
  $\displaystyle \beta = \min_{i \in V}
  \frac{c_i}{\min(Vol(S),Vol(\bbar(S)))}$.  Clearly $\beta \geq k$. 
  
  We set $V_+ = \{i | f_i \geq 0\}$ and $E_+$ the set of edges with at
  least one endpoint in $V_+$.  Finally, set $\displaystyle g_i =
  \left\{ \begin{array}{lc} f_i & \text{iff } i \in V_+ \\ 0 &
      \text{otherwise} \end{array}. \right.$ 
  Now compute $\lambda$:
  \begin{align*}
    \lambda &= \sum_{i \in V_+} \frac{ f_i \left( \sum_{(i,j) \in E_+} w_{ij}(f_i - f_j)\right)}{\sum_{i \in V_+} f_i^2 w_i}\\
    &\hspace{1cm}\text{(cutting off some numerator terms)} \\
    &\geq \frac{\sum_{(i,j) \in E_+} w_{ij}(g_i - g_j)^2}{\sum_{i \in V} g_i^2 w_i}\\
    &= \frac{\left( \sum_{(i,j) \in E} w_{ij}(g_i - g_j)^2 \right) \left( \sum_{(i,j) \in E} w_{ij}(g_i + g_j)^2 \right)}{\left( \sum_{i \in V} g_i^2 w_i \right) \left( \sum_{(i,j) \in E} w_{ij}(g_i + g_j)^2 \right)}\\
    &\hspace{1cm}\text{(by Cauchy-Schwartz)} \\
    &\geq \frac{\left( \sum_{(i,j) \in E} w_{ij}(g_i^2 - g_j^2) \right)^2}{\left( \sum_{i \in V} g_i^2 w_i \right) \left( \sum_{(i,j) \in E} w_{ij}(g_i + g_j)^2 \right)}\\
    &\hspace{1cm}(\text{Since }\sum_E w_{ij}(g_i + g_j)^2 \leq 2\sum_V g_i^2 w_i) \\ 
    &\geq \frac{\left( \sum_{(i,j) \in E} w_{ij}(g_i^2 - g_j^2) \right)^2}{2 \left(\sum_V g_i^2 w_i \right)^2}\\
    &\hspace{1cm}\text{(discarding cross terms from the numerator)} \\
    &\geq \frac{\left( \sum_i c_i |g_i^2 - g_j^2| \right)^2}{2 \left(\sum_V g_i^2 w_i \right)^2} \geq \frac{\left( \sum_i \beta Vol(S_i) |g_i^2 - g_j^2| \right)^2}{2 \left(\sum_V g_i^2 w_i \right)^2}\\
    &\hspace{1cm}\text{(telescoping the sum)}\\
    &= \frac{\left( \sum_i \beta w_i g_i^2 \right)^2}{2 \left(\sum_V g_i^2 w_i \right)^2} = \frac{\beta^2}{2} \geq \frac{h^2}{2}
  \end{align*}
\end{proof}

To better understand the proof of Theorem \ref{theorem_A2}, we
summarize Cheeger's original argument in the context of a Riemannian
manifold $M$.  Define Cheeger's constant $h$ by: $$h = \inf_{S \text{
    separating } M} \frac{Area(S)}{Volume(M)}$$ Let $f$ be the first
eigenfunction of the Laplacian $\Delta$ orthogonal to constants. 
\begin{align*}
\lambda &= \frac{\int f \Delta f}{\int f^2} = \frac{\int f \Delta f}{\int f^2} \frac{\int f^2}{\int f^2} \geq \frac{\left( \int |f||\nabla f| \right)^2}{\left( \int f^2 \right)^2} \\ 
&\geq \frac{1}{4} \frac{\left(\int (\nabla f^2)\right)^2}{(\int f^2)^2}
\end{align*}

Define $t = f^2$ as a parameter on $M$ and apply the co-area formula
to the $t$-levels to obtain: 
\begin{align*}
\int (\nabla f^2) &= \int Area(t \text{-level}) dt \leq h \int Vol[0,t] dt \\
&= -h \int t \frac{d Vol}{dt}dt = -h \int t d Vol = -h \int f^2 d Vol.
\end{align*}

Thus, $$\lambda \geq \frac{1}{4} \frac{\left( -h \int f^2 d Vol
  \right)^2}{\left( \int f^2 d Vol \right)^2} = \frac{h^2}{4}$$ 
\hfill \ensuremath{\Box}

 We offer a protocol which {\em may} succeed in verifying that a
 quantum mechanical Hamiltonian $H_n:\mcH_n\rightarrow\mcH_n$ is
 {\em gappless} above its (known) ground state $(\psi_0)_n$ as a
 limit $n\rightarrow\infty$ is taken. 
\begin{protocol}
  Select preferred kets $\ket{i}$ for $\mcH_n$ (we do not clutter the
  notation by showing the dependence of the index set $\{i\}$ on $n$.)
  use $H_n, (\psi_0)_n$ to construct the weighted graph $G_n$ as
  above, and set $\lambda^n = \lambda_1(\mathcal{L}_{G_n})$. Recall
  that the construction of $G_n$ requires extracting a constant $c_n$
  (in our two dimensional example $c^n = c^{\theta} =
  (2\cos\frac{\theta}{2})^{-1}$,) the minimal suppression factor for
  interactions $H_{ij}$ required to normalize the vertex weights $d_i := |a_i|^2$
  be positive and with $\sum_id_i = 1$. Compute the ratio
  $\frac{D_n\sqrt{\lambda^n}}{c_n},$ where $D_n = \max_i|a_{i,n}|^2$
  for $\psi_{0,n} = \sum_ia_{i,n}\ket{i}$. 
  \begin{claim}
    If $\frac{D_n\sqrt{\lambda^n}}{c_n}\rightarrow 0$, then $H_n$ is gapless, i.e. $\lambda_1(H_n)-\lambda_0(H_n)\rightarrow 0$ as $n\rightarrow\infty$.
  \end{claim}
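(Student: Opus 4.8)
\emph{Proof sketch.} The plan is the variational argument outlined in the preamble of this appendix: exhibit a trial vector $\psi_1\in\mcH_n$ orthogonal to $(\psi_0)_n$ whose energy expectation exceeds the ground state energy by a quantity bounded by $\tfrac{2\sqrt2\,D_n}{c_n}\sqrt{\lambda^n}$, so that the min--max principle forces $\lambda_1(H_n)-\lambda_0(H_n)$ to that order. The trial vector will be the ground state amplitudes $a_i$ reweighted by a near-optimal Cheeger cut of $G_n$. To produce the cut, note that by Theorem~\ref{theorem_A2} we have $\lambda^n\ge h_{G_n}^2/2$, hence $h_{G_n}\le\sqrt{2\lambda^n}$; since $V(G_n)$ is finite we may fix $S\subset V(G_n)$ with $F_w(S,\bbar{S})/\min(Vol(S),Vol(\bbar{S}))=h_{G_n}$.

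First I would set up the trial vector. Write $E_0=\lambda_0(H_n)$ and $(\psi_0)_n=\sum_i a_i\ket i$ (suppressing $n$ on the $a_i$), so that $d_i=|a_i|^2$ and $Vol(S)=\sum_{i\in S}d_i$. Let $g$ be the step function $g_i=Vol(S)^{-1}$ on $S$ and $g_i=-Vol(\bbar{S})^{-1}$ on $\bbar{S}$, and put $\psi_1=\sum_i g_i a_i\ket i$. Two one-line computations give $\innerprod{\psi_1}{(\psi_0)_n}=\sum_i g_i d_i=1-1=0$ and $\|\psi_1\|^2=\sum_i g_i^2 d_i=Vol(S)^{-1}+Vol(\bbar{S})^{-1}>0$; in particular $\psi_1\ne 0$ and, assuming as usual a non-degenerate ground state (otherwise the gap is already $0$), $\lambda_1(H_n)\le \bra{\psi_1}H_n\ket{\psi_1}/\|\psi_1\|^2$.

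The key step is the evaluation of $\bra{\psi_1}(H_n-E_0)\ket{\psi_1}$. Since $(\psi_0)_n$ is an eigenvector of $H_n$, the Hermitian matrix $M_{ij}:=\bar a_i\,(H_n-E_0\,\id)_{ij}\,a_j$ has vanishing row sums, $\sum_j M_{ij}=\bar a_i\sum_j (H_n-E_0\id)_{ij}a_j=0$, whence for the real function $g$ the ``Laplacian'' identity
\[
  \bra{\psi_1}(H_n-E_0)\ket{\psi_1}=\sum_{i,j}g_i M_{ij}g_j=-\tfrac12\sum_{i\neq j}M_{ij}(g_i-g_j)^2
\]
holds. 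With our step function $(g_i-g_j)^2$ vanishes unless $i,j$ lie on opposite sides of the cut, where it equals $\big(Vol(S)^{-1}+Vol(\bbar{S})^{-1}\big)^2$; using Hermiticity to combine the two orientations, the right-hand side equals $-\big(Vol(S)^{-1}+Vol(\bbar{S})^{-1}\big)^2\,\mathrm{Re}\sum_{i\in S,\,j\in\bbar{S}}\bar a_i (H_n)_{ij}a_j$. Now $|\bar a_i(H_n)_{ij}a_j|\le D_n|(H_n)_{ij}|$ since $|a_i|^2,|a_j|^2\le D_n$, and $|(H_n)_{ij}|=c_n^{-1}w_{ij}$ by the construction of $G_n$, so $\big|\bra{\psi_1}(H_n-E_0)\ket{\psi_1}\big|\le \big(Vol(S)^{-1}+Vol(\bbar{S})^{-1}\big)^2\,\tfrac{D_n}{c_n}\,F_w(S,\bbar{S})$. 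Dividing by $\|\psi_1\|^2$, using $Vol(S)^{-1}+Vol(\bbar{S})^{-1}\le 2/\min(Vol(S),Vol(\bbar{S}))$ and $h_{G_n}\le\sqrt{2\lambda^n}$, gives
\[
  \lambda_1(H_n)-\lambda_0(H_n)\ \le\ \frac{2D_n}{c_n}\,h_{G_n}\ \le\ \frac{2\sqrt2\,D_n}{c_n}\sqrt{\lambda^n}\ \longrightarrow\ 0
\]
by hypothesis. A marginally sharper variant replaces $g$ by the genuine eigenfunction $f$ of $\mathcal{L}_{G_n}$ (orthogonal to $(\sqrt{d_i})$), takes $\psi_1=\sum_i f_i a_i\ket i$, and uses the Rayleigh identity $\sum_{(i,j)\in E}w_{ij}(f_i-f_j)^2=\lambda^n\sum_i f_i^2 d_i$ from the proof of Theorem~\ref{theorem_A2} to obtain directly $\lambda_1(H_n)-\lambda_0(H_n)\le D_n\lambda^n/c_n$; as the normalized Laplacian obeys $\lambda^n\le 2$, this too tends to $0$ under the stated hypothesis.

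I expect the only substantive point to be the ``Laplacian'' identity for $\bra{\psi_1}(H_n-E_0)\ket{\psi_1}$, i.e.\ using the eigenvector property of $(\psi_0)_n$ to make the dressed matrix $M$ row-sum-zero; the rest is bookkeeping of $c_n$ and $D_n$ and an application of Theorem~\ref{theorem_A2} in the form $h_{G_n}\le\sqrt{2\lambda^n}$. Minor technical matters to check: the well-definedness of $c_n$ as the largest constant keeping all loop weights $w_{ii}\ge 0$ (needed so that $|(H_n)_{ij}|=c_n^{-1}w_{ij}$ is meaningful), the harmless case of complex amplitudes $a_i$ (handled by keeping the coefficients of $\psi_1$ locked to the $a_i$ and using only the real-valued $g$ or $f$), and the usual caveat that a degenerate ground state renders the conclusion vacuous.
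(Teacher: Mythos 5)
Your argument is correct and is essentially the paper's own proof: the same Cheeger-cut step-function reweighting of the ground state $\psi_1=\sum_i g_i a_i\ket{i}$, the same bound $|\bar a_i (H_n)_{ij} a_j|\le D_n w_{ij}/c_n$ on the cross-cut matrix elements, and Theorem \ref{theorem_A2} in the form $h_{G_n}\le\sqrt{2\lambda^n}$, differing only in harmless constants and in your cleaner bookkeeping via the row-sum-zero identity for $M_{ij}=\bar a_i(H_n-E_0\,\id)_{ij}a_j$. Your sharper variant using the actual eigenfunction $f$ (yielding $D_n\lambda^n/c_n$) is not in the paper's proof of the Claim but matches the trial state $\Psi_1=f\Psi_0$ sketched at the end of the appendix and the paper's remark about replacing $\sqrt{\lambda^n}$ by $\lambda^n$.
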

  \begin{proof}
    We have checked $\lambda^n = \lambda_1(\mathcal{L}_{G_n})\geq
    \frac{h^2_{G_n}}{2}$, so $h_n = h_{G_n}<\sqrt{2\lambda^n}$. As in
    the proof of (\ref{theorem_A1}), let $S^n$ achieve $h_n$ and
    define: 
    \begin{align*}
      \psi_1^n(\ket{i}) &= \frac{1}{a_n}(\frac{1}{a_n}+\frac{1}{b_n}),&i\in S^n, a_n = Vol(S^n)\\
      \til{\psi_1^n}(\ket{i}) &= -\frac{1}{b_n}(\frac{1}{a_n}+\frac{1}{b_n}),&i\in \bbar{S}^n, b_n = Vol(\bbar{S}^n).
    \end{align*}
    Then, suppressing the $n$ super/subscripts,
    \begin{align*}
      \bra{\psi_1}H\ket{\psi_1} - \bra{\psi_0}H\ket{\psi_0} & = E_1-E_0\\
      &\leq \sum_{\substack{i\in S\\j\in S}}\frac{1}{a}\frac{1}{1-a}|a_i||a_j||H_{ij}|\\
      &\leq \frac{1}{a}\frac{1}{1-a}D~F_{w}(S,\bbar{S})c^{-1}\\
      &=: \frac{1}{a}\frac{1}{1-a}DFc^{-1},
    \end{align*}
    where $a=Vol(S)$ and without loss of generality, $Vol(S)\leq
    Vol(\bbar{S})$. From Cheeger's inequality (\ref{theorem_A2}), we
    have: 
    \begin{displaymath}
      \frac{F}{a} = h\leq \sqrt{2\lambda},
    \end{displaymath}
    so
    \begin{displaymath}
      E_1-E_0 \leq \frac{1}{a}DFc^{-1}\leq D\sqrt{2\lambda}c^{-1}.
    \end{displaymath}
\end{proof}

This protocol allows a systematic approach for vetting models which
produce known topological wave functions as the ground state (say on a
2-sphere) but may not be gapped above the ground state. There have
been previous successes in showing models gapless by finding directly
the Cheeger cut into $V(G) = S\amalg\bbar{S}$ \cite{FNS}. The present protocol
may be more practical as less geometric insight is required, unfortunately plugging in the analytical bounds from Appendix C  into the claim we find $\frac{D_n\sqrt{\lambda^n}}{c^n}\approx{n^{{1}/{4}}}$. Using the numerical scalings in  Appendix D we find $\frac{D_n\sqrt{\lambda^n}}{c^n}\approx{n^{{1}/{8}}}$, which does not approach zero as $n\rightarrow\infty$ either. Further geometric insight into the graph $G_n$ might allow one to use $\lambda^n$ rather than $\sqrt{\lambda^n}$ in the claim, yielding $n^{{-3}/{4}}$. This would be legitimate if $G_n$ looked spectrally more like a tree than a line.

The $d=1$ loop gas \cite{FNS} has the surprising feature that the very
same ground state arises as a gapped and gapless ground states of two
different Hamiltonians \cite{FNS05}.  A second example
($d = \sqrt{2}$) was proven \cite{TTSN} via decay of spatial
correlators never to arise as a gapped ground state for any local
Hamiltonian.  However, in many cases unlike the above, one will not be
so fortunate to find a narrow cut for $G$.  Rather, more generically
one may expect to learn something about the spectrum ($\lambda$) of
$\mathcal{L}$ on $G$ and perhaps some properties of the first
eigenfunction $f$ via Monte Carlo methods applied to $G$ (since this
problem is completely classical).  In this case, one should try to use
the protocol. In a gapless system, to find the precise
power at which $\lambda^n\rightarrow 0$, more refined trial wave
functions involving a gradual, not abrupt, phase change across the cut
should be studied, as in \cite{FNS}. 

\begin{note}\label{note_2}
Our protocol can be used in
contrapositive form to argue that the $\lambda$ associated to certain
weighted graphs of configurations cannot decay too quickly in system
size when we know that the (weighted) graph arises as the ground state
of a gapped Hamiltonian, such as the Levin-Wen model.  Specifically,
when applied to the Fibonacci anyons, one may argue that the set of
subgraphs $G$ of the honeycomb when weighted by its topological
evaluation (see \cite{FFNWW}) and supplied with edges corresponding to
a bounded number of $F$-moves and local circle creation/deletions on a
finer scale honeycomb must have $\lambda_{G}$ decaying no faster than
$(\text{system size})^{-1}$. 
\end{note}
\end{protocol}

Monte Carlo methods may eventually be able to extract some information
of the eigenfunction $f$ associated to
$\lambda_1(\mathcal{L}_{G^n})$. It is reasonable to suppose that
knowledge of $f$ could refine the previous protocol. The final
paragraphs of \ref{Appendix:Spectrum} present, schematically,
the outlines of a complementary approach to extracting information on
the quantum mechanical spectrum ($H$) from the first eigenfunction of
the classical $\mathcal{L}_G$. Let us use $f$ to build a trial wave
function $\Psi_1 = f\Psi_0$ from the ground state $\Psi_0$ of $H$.
\begin{equation}\label{eqn_1_thmA3}
\lambda_{1,H}-\lambda_{0,H} \leq \langle \psi_1 |H| \psi_1 \rangle - \langle \psi_0  |H| \psi_0 \rangle = \sum_\alpha \langle f \psi_0 | T_\alpha | f \psi_0 \rangle - \sum_\alpha \langle \psi_0 |T_\alpha| \psi_0 \rangle
\end{equation}
where we have written $H = \sum_{\alpha}T_{\alpha}$ as a sum of local
terms. For any term $T_\alpha$ which acts at state $i \in V(G)$ we
should study the variation of the quadratic forms on the right hands
side of equation \ref{eqn_1_thmA3} at second order in the gradient
$\nabla f$.  (The 0-th order variation vanishes since $f$ is
normalized, $\int f^2 d\mu - \int 1^2 d\mu = 0, \sum_i |a_i|^2 = 1$.
After summing over $\alpha$, first order variation must also vanish
since $\langle \psi_0 |H| \psi_0 \rangle$ is critical for (actually
minimizes) expectation.) 

At $2$nd order in $\nabla f$ and with $\alpha$ fixed,
\begin{equation}\label{eqn_2_thmA3}
\text{r.h.s.}_\alpha(\text{\ref{eqn_1_thmA3}})\hspace{0.25cm}\begin{matrix}<\vspace{-0.5cm}\\\sim\end{matrix}\hspace{0.25cm}
\lambda_{T_\alpha}^{\max} \| \nabla f_i \|^2
\end{equation}
where $\lambda_{T_{\alpha}}^{\max}$ measures the largest eigenvalue of
$T_{\alpha}$ after normalizing all eigenvalues to be positive. In
(\ref{eqn_2_thmA3}), $i$ ranges over states on which $T_\alpha$
operates.  If $n$ is the maximum number of terms $T_\alpha$ operating
on any state $|i\rangle$, we may ``integrate over $\alpha$'' to obtain
from (\ref{eqn_2_thmA3}): 

\begin{align*}
\text{r.h.s.(\ref{eqn_1_thmA3})} &\leq n \lambda_T^{\max} \sum_{i \in V(G)} \| \nabla f_i \|^2 \\ 
&\leq n \lambda_T^{\max} \lambda_G,\hspace{1.5cm}\text{for }\lambda_{Y}^{\max} = \max\{\lambda_{T_{\alpha}}^{\max}\}.
\end{align*}
So, at least schematically, there should be an estimate:
\begin{displaymath}
  \lambda_{1,H}-\lambda_{0,H}\leq n\lambda_{T}^{\max}\lambda_G.
\end{displaymath}

We finally wish to mention a related paper \cite{Iannis} which uses similar methods to argue for the existence of gapped models.
%
%


\section{Outer Planar Triangulation}
\label{Appendix:OuterPlanar}


An outer planar triangulation ($OPT$) is a triangulation of the $n$-gon $P_n$ in which no new vertices in the interior disk are permitted.  The $n$-gon is given a fixed base point vertex and orientation.  Thus, for $n =$ 3, 4, 5, 6,$\dots$ the number of $OPT$ are 1, 2, 5, 14,$\dots$.  In general, $|OPT_{n+2}| = c_n = \frac{1}{n+1} {2n\choose n}$, the $n$th Catalan number.  This statement may be familiar as the correct counting of dual planar trivalent trees.

Let $G_n$ be the abstract graph with vertices $OPT_n$ and edges determined by ``diagonal flips'' defined on quadrilaterals made from a pair of triangle sharing a bond.  Give all vertices and edges of $G_{n+2}$ unit weight.  The spectrum $L$ of $G$ may be similar to that of the more interesting case of triangulations of the 2-sphere and because there are simple asymptotic formulas for the Catalan number, we can explicitly compute a lower bound $k_n \leq O(n^{-1/2})$ for our Cheeger-like isoperimetric constant $\displaystyle k_n = \min_{S \subset V(G_n)} \frac{E(S, \bbar{S})}{\min(Vol(S), Vol(\bbar{S}))}$ where $E(S,\bbar{S})$ counts edges from $S$ to $\bbar{S} = V(G_n) \setminus S$ and $Vol(S) = \#$ vertices in $S$.  We thank Oded Schramm for guiding us through this example.

For simplicity (only) take $n$ odd.  Now there will certainly be a unique ``central triangle'' $\Delta$ with the property the three connected bits of sides($P_n$) all contain less than $\frac{n}{2}$ sides.  Call the ``lengths'' of these three bits $\frac{n}{2} > n_1 \geq n_2 \geq n_3$, $n_1 + n_2 + n_3 = n$.  We divide $OPT_n$ into two disjoint pieces, $thick_n \cup thin_n = OPT_n$, and $thick_n \cap thin_n = \emptyset$ according to whether $n_3 \geq \frac{n}{10}$ (called $thick_n$) or $n_3 < \frac{n}{10}$ (called $thin_n$).  

We will use the well known relation $c_{n \pm const} = O(1) 4^n n^{-\frac{3}{2}}$ and in the future use $\approx$ to absorb the $O(1)$.

\begin{align*}
|thick_n| &\approx \sum_{\frac{n}{2} > n_1 \geq n_2 \geq n_3 \geq \frac{n}{10}} 4^{n_1} n_1^{-\frac{3}{2}} 4^{n_2} n_2^{-\frac{3}{2}} 4^{n_3} n_3^{-\frac{3}{2}} \\
&= \sum_{\frac{n}{2} > n_1 \geq n_2 \geq n_3 \geq \frac{n}{10}} 4^n (n_1 n_2 n_3)^{-\frac{3}{2}} \\
&\approx \sum_{O(n^2) \text{ terms}} 4^n (n^3)^{-\frac{3}{2}} \\
&\approx 4^n n^{-\frac{5}{2}}
\end{align*}

and

\begin{align*}
|thin_n| &\approx \sum_{1 \leq s \leq \frac{n}{10}} (\# (n_1,n_2) \text{ with } n_3 = s)(\# (n_1,n_2,n_3 = s)) \text{ configurations} \\
&\approx \sum_{1 \leq s \leq \frac{n}{10}} s \left(4^{n_1} n_1^{-\frac{3}{2}} 4^{n_2} n_2^{-\frac{3}{2}} 4^{n_3} s^{-\frac{3}{2}}\right), \ \text{for typical}\ n_1,n_2\ \text{with}\ n_1+n_2=n-s \\
&\approx 4^n \sum_{1 \leq s \leq \frac{n}{10}} s^{-\frac{1}{2}} n^{-3} \\
&\approx 4^n n^{\frac{1}{2}} n^{-3} \\
&= 4^n n^{-\frac{5}{2}}
\end{align*}

So both $thick$ and $thin$ portions of $OPT_n$ have $O(1)$ proportion of all the vertices on $G_n$.  

Now consider the probability of being ``near,'' within $a$ of the boundary between thick and thin: $|n_3 - \frac{n}{10}| < a$.  Call such configurations ``boundary'' or $\partial_a$, $|\partial_a| \approx a 4^n n^{-\frac{7}{2}}$ as there are $O(na)$ such numerical configurations each occurring order $4^n n^{-\frac{9}{2}}$ ways.  Only diagonal flips on one of the three sides of the central triangle can possibly affect membership in $thick_n$ and $thin_n$, and we should estimate how many such flips can relate $thin$ to $thick$.  The largest contribution comes from flips on side $n_1$ (or equivalently $n_2$) in which $a$ vertices of the $n$-gon move to $n_3$ where $a = o(n)$.  We estimate the number of such $G$-edges as follows:

\begin{figure}[htpb]
\labellist \small\hair 2pt
  \pinlabel $n_1$ at 32 50
  \pinlabel $n_2$ at 66 58
  \pinlabel $n_3$ at 62 8
  \pinlabel $b$ at 18 8
\endlabellist
\centering
\includegraphics[scale=1]{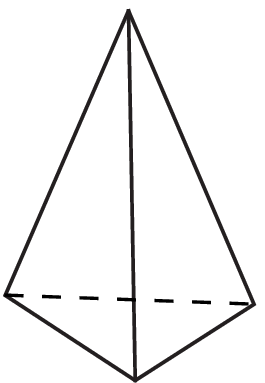}
\caption{} \label{fig_outerplanar}
\end{figure}


\begin{align*}
E_1(thin_n, thick_n) &\approx \text{const}\,n \sum_{\frac{n_1}{2}\succcurlyeq b \succcurlyeq 1}
(4^n n^{-\frac{7}{2}})  \left(\frac{b^{-\frac{3}{2}}(n_1-b)^{-\frac{3}{2}}}{n_1^{-\frac{3}{2}}}\right).
\end{align*}

Above, $\text{const}\, n$ reflects a summation over $a$. Configurations with $n_3=\frac{n}{10}-a$ are counted in the first term; the fraction crossing from thin to thick upon the flip indicated in Fig.~\ref{fig_outerplanar} is given by the second term within the sum.

\begin{align*}
E_1(thin_n, thick_n) &\approx 4^n \text{const}\, n \left(\sum_{\frac{n_1}{2} \succcurlyeq b \succcurlyeq 1} n^{-\frac{7}{2}} b^{-\frac{3}{2}} \right) \\
 &\approx 4^n \text{const}\, n (n^{-\frac{7}{2}} n^{-\frac{1}{2}}) \\
 &\approx 4^n n^{-3}.
\end{align*}

Neglected terms, such as $a$ comparable to $n$, are down by a power $n^{-\frac{1}{2}}$ and have been dropped.

Putting the three calculations together, we conclude that the isoperimetric Cheeger constant $k$ satisfies $$k \preccurlyeq \frac{4^n n^{-3}}{4^n n^{-\frac{5}{2}}} = n^{-\frac{1}{2}}$$ 

This means that the valence normalized Cheeger constant $h$ appropriate to random walks \cite{Chung} satisfies $$h \preccurlyeq n^{-\frac{3}{2}}$$ and that the mixing time is at least $O(n^{3/2})$.  A mixing time of $\approx n$ corresponds to the usual graph theoretic notion of an  ``expander.'' Numerical investigation of
this model indicates that the actual mixing time is $O(n^2)$.

\section{Analytical results for the off-lattice loop gas}
\label{Appendix:C}

\subsection{An off-lattice loop gas model}

In this appendix, we study analytical properties of an off-lattice loop gas
model. 
The basis states of the model are configurations of non-intersecting, indistinguishable
loops, identifying loop configurations related by isotopy. 
Loop configurations with at most $N$ loops can be  represented by unlabeled 
rooted trees with at most $N$ nodes, excluding the root node.
Using the recursion relations of Ref.~\cite{Robinson:75}
the number of such trees ({\sl i.e.}, the number of loop configurations) for a fixed number of nodes $n$  is given by
\begin{equation}
C(n)=\frac{1}{n-1}\sum_{k=1}^{n-1} C(n-k)\sum_{m|k}mC(m) \,, 
\label{Configurations}
\end{equation}
where $n>1$, $C(1)=1$,  and ``$m|k$'' denotes all $m$ which are factors of $k$. 
 A similar expression exists  for the number of leaves (excluding the root) of 
 unlabeled rooted trees, 
 \begin{equation}
L(n) = \sum_{k=1}^{n-1}C(n-k)\sum_{m|k}L(m) \,
\label{Leaves}
\end{equation}
with $L(1)=1$.

The Hamiltonian of our off-lattice model acts locally by the three types of moves shown in
Fig.~\ref{fig:lg:moves}: 
a) The inflation move corresponds to creating or annihilating a loop. 
b) The surgery move is merging of two loops. 
c) The self-surgery move is a surgery move of a loop with itself. 
We define the Hamiltonian $H$ as a sum of projectors performing inflation, surgery, and
self-surgery moves such that for the ground state wave function $\ket{\psi_0}$ we have $H|\psi_0\rangle =0$. 
The ground state wave function then becomes an equal-weight superposition
of all loop configurations $l$
\begin{equation}
|\psi_0\rangle = \sum_{l}  |l\rangle \,.
\end{equation}

\begin{figure}[t]
\begin{center}
\includegraphics[width=0.85\columnwidth]{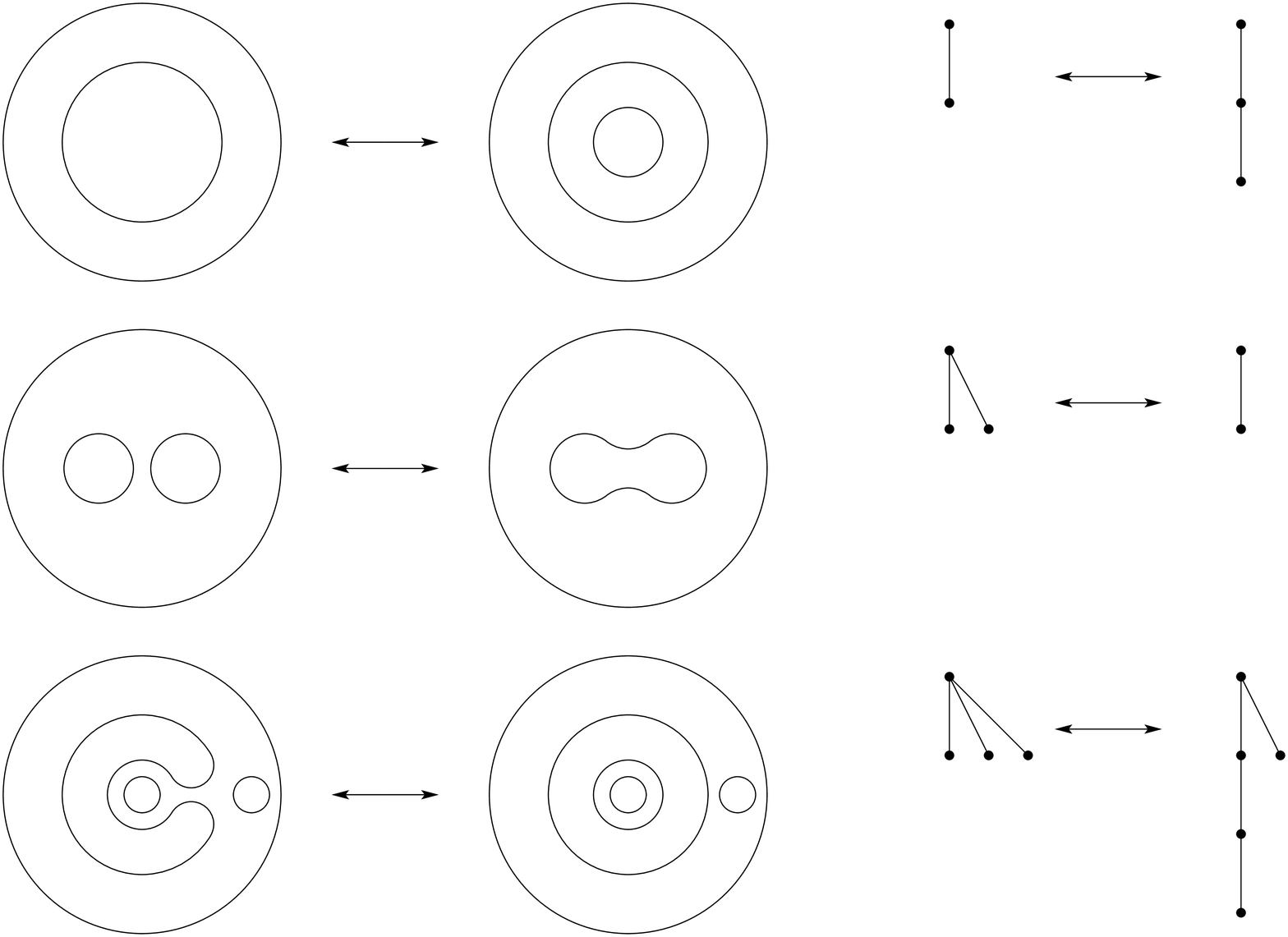}
\end{center}
\caption{Loop gas moves. From top to bottom: a) inflation move, b) surgery move,
and c) self-surgery move. }
\label{fig:lg:moves}
\end{figure}

Note that the Hamiltonian takes the form (up to rescaling) of an unweighted {\sl graph Laplacian} $L$ (see Appendix A): 
each transition (via inflation, surgery, of self-surgery moves) from  a state $|\alpha\rangle$
to a state $|\beta\rangle$ gives an entry of $-1$ in the Hamiltonian
matrix, and the diagonal elements are
$H_{\alpha\alpha}=-\sum_{\beta\ne\alpha} H_{\alpha\beta}$, {\sl i.e.}, the
diagonal element $H_{\alpha\alpha}$ count the number of transitions out of
state $|\alpha\rangle$.


\subsection*{Topological protection}

We consider the   loop gas on an annulus (periodic boundaries in one direction).
A particular loop gas configuration can be represented by a tree where one leaf marks the inner edge of the annulus
and the root corresponds to the outer edge of the annulus as illustrated in Fig.~\ref{Fig:annulus}.
The surgery move alters the number of loops that wind around the system  by $\pm 2$;
the parity of the winding is hence a conserved quantity.   

Is it possible to {\sl locally} distinguish even and odd winding sectors? 
In a lattice realization of a loop gas, such as the toric code \cite{ToricCode}, the expectation values of any local operator
in these sectors split by at most an exponentially small amount -- the hallmark of topological protection. 
In an off-lattice model, on the other hand, the splitting of these winding sectors turns out to be only algebraically small.
To see this, consider the average number of leaves $L^{p}_N$ in a sector with parity $p$. The difference between the
odd and even winding sectors
\begin{equation}
 A(N) =| \langle L^{\rm ev}_N \rangle - \langle L^{\rm odd}_N \rangle| \propto 1/N
\end{equation}
can be computed using Eqs.~\eqref{Leaves} and \eqref{Configurations} and is found to decay algebraically as $1/N$,
which is also illustrated in Fig.~\ref{leaves}.

\begin{figure}[ht]
  \begin{center}
  \includegraphics[width=0.6\columnwidth]{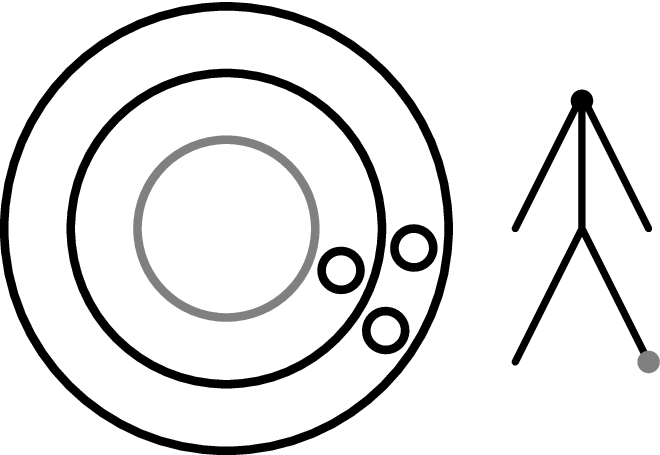}
  \end{center}
\caption{Loop gas on an annulus.}
\label{Fig:annulus}
\end{figure}

 \begin{figure}[ht]
\begin{center}  
\includegraphics[width=0.7\columnwidth]{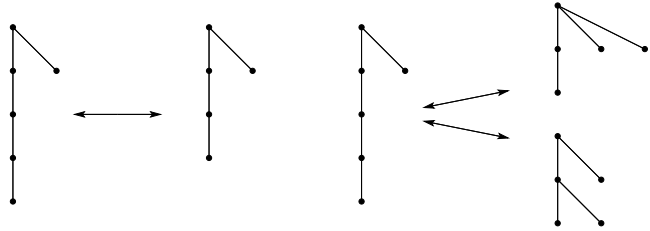}
\end{center}
\caption{ Height-changing moves (inflation and self-surgery --- surgery moves do not change the height) are only possible if a tree has only one leaf at its maximal height level, like the configuration shown here. Such configurations then allow for {\it one} inflation move and {\it one} self-surgery move 
 that changes the height by $1$ (left panel). In contrast, 
  there are many self-surgery moves possible that change the height by $2$
  (right panel).
}
\label{tree2}
\end{figure}

\begin{figure}[ht]
\begin{center}
\includegraphics[width=0.75\columnwidth]{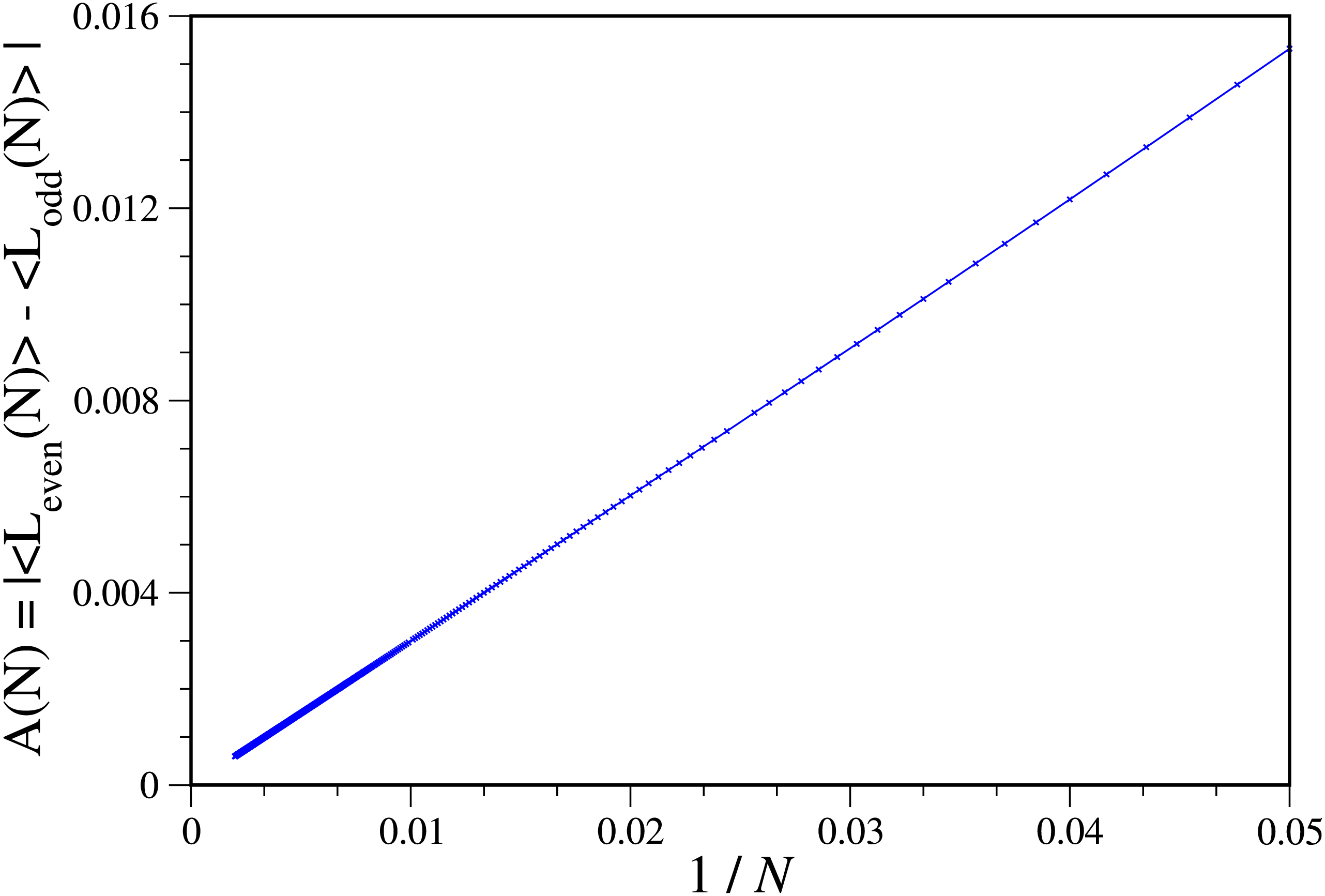}
\end{center}
\caption{
  Power-law decay of $A(N)$, the difference of the average number of leaves of unlabeled rooted trees in even and odd winding sectors 
  for an annulus. }
\label{leaves}
\end{figure}

\subsection{Proof of gaplessness of loop gas Hamiltonian}

A loop gas configuration of $n$ loops can be represented by a rooted unlabeled
random tree with $n+1$ nodes. In the following, we shall refer to this tree
representation. The ground state of the  Hamiltonian is given by
\begin{equation}
|\psi_0\rangle = \frac{1}{\sqrt{C_N}} \sum_{\{\alpha\}} |\alpha \rangle
\end{equation}
where the sums runs over all possible rooted unlabeled random tree configurations, and $C_N$ is the total number
of such configurations
 with at most $N$ nodes,
\begin{equation}
C_N=\sum_{n=1}^N \sum_{h=0}^{N-1}C(n,h).
\end{equation}
where $C(n,h)$ denotes the number of tree configurations with $n$ nodes and height $h$ where $n>h$.
 The action of the Hamiltonian  is such that
\begin{equation}
H|\psi_0\rangle = 0.
\label{GS}
\end{equation} 
The terms of the loop gas Hamiltonian (inflation, surgery, self-surgery) 
are equivalent to the following modifications in the tree representation
(we list only the direction of the moves that remove a node):
inflation corresponds to removing a leaf of the tree, surgery corresponds to
``fusing" two sibling nodes into a single node, and self-surgery corresponds
to ``fusing" a child node with its grandparent node (all children nodes of
the child node become sibling nodes of its parent node, the whole process can
be visualized as ``folding" the tree such that child node and grandparent node
become one node).
The different moves in the Hamiltonian  change the tree height by either $0$
(all surgery moves, some self-surgery moves, some inflation moves), $1$ (some
inflation moves, some self-surgery moves), or $2$ (some self-surgery moves),
and thus
 \begin{eqnarray}
 -\langle \alpha |H|\alpha\rangle  &=& \sum_{\{\beta,h_{\beta}=h_{\alpha},\beta\ne \alpha\}} \langle \beta |H|\alpha\rangle
 + \sum_{\{\beta,h_{\beta}=h_{\alpha}+1\}} \langle \beta |H|\alpha\rangle
 + \sum_{\{\beta,h_{\beta}=h_{\alpha}-1\}} \langle \beta |H|\alpha\rangle\nonumber \\
 && + \sum_{\{\beta,h_{\beta}=h_{\alpha}+2\}} \langle \beta |H|\alpha\rangle
 + \sum_{\{\beta,h_{\beta}=h_{\alpha}-2\}} \langle \beta |H|\alpha\rangle.
 \label{Relation1}
 \end{eqnarray}
We define a ``constrained" number of configurations at height $h$,
\begin{equation}
 C(h) := \sum_{n,\ {\rm where} \ g(n) \ge h} C(n,h),
\end{equation}
where $g(n)$ is some function of $n$ to be
defined below. Using this definition, we make sure that only a constrained
number of configurations is included. Next, we consider the state
\begin{equation}
 |\psi_1\rangle =  \frac{1}{\sqrt{C}} \sum_{h=\bar{h} }^{ m \bar{h} -1}
  \frac{\exp(2\pi i h / \bar{h})} {C(h)}
  \sum_{\{\alpha,\ h_\alpha=h,\ g(n_\alpha) \ge h \}} |\alpha\rangle,
\label{eq:trial:state}
\end{equation}
where $m>1$, $\bar{h}={\rm Int}[k(N)]$ (${\rm Int}[x]$ denotes the smallest
integer number larger than $x$), and $C=\sum_{h=\bar{h}}^{m\bar{h}-1}1/C(h)$.
The following proof relies on the inequality $C(n,h+1)\le C(n,h)$ that
should be valid for all configurations included in the above trial
state. This inequality can be satisfied by proper choice of the functions
$g(n)$ and $k(n)$, see below.
   
The states  $|\psi_0\rangle$ and $|\psi_1\rangle$ are two distinct orthonormal
basis states,
\begin{eqnarray*}
 \langle \psi_1|\psi_1\rangle =
  \frac{1}{C}\sum_{h=\bar{h}}^{m\bar{h}-1} \frac{1}{(C(h) )^2}
    \sum_{\{\alpha,\ h_\alpha=h,\ g(n_\alpha) \ge h \}}
    \langle \alpha|\alpha\rangle =
  \frac{1}{C} \sum_{h=\bar{h}}^{m\bar{h}-1}\frac{1}{C(h) } = 1,
\end{eqnarray*}
and 
\begin{eqnarray*}
 \langle \psi_0|\psi_1\rangle &=&  \frac{1}{\sqrt{C C_N}}
   \sum_{h=\bar{h}}^{m\bar{h}-1} \frac{\exp(2\pi i h/\bar{h}) }{C(h)}  
    \sum_{\{\alpha,\ h_\alpha=h,\ g(n_\alpha) \ge h \}}
    \langle \alpha|\alpha\rangle \\
 &=& \frac{1}{\sqrt{C C_N}} \sum_{h=\bar{h}}^{m\bar{h}-1}
   \exp(2\pi i h/\bar{h}) = 0.
\end{eqnarray*}

The energy gap can be estimated as
\begin{eqnarray*}
\Delta E &\le& \langle \psi_1|H|\psi_1\rangle \\
&=& \frac{1}{C } \sum_{h=\bar{h}}^{m\bar{h}-1}\sum_{h'=\bar{h}}^{m\bar{h}-1}
 \frac{\exp( 2\pi i (h-h')/\bar{h}  )}{C(h)C(h')} 
\sum_{\{\beta,\ h_\beta=h',\ g(n_{\beta})\ge h' \}}
\sum_{\{\alpha,\ h_\alpha=h,\ g(n_{\alpha})\ge h\}} \langle \beta|H|\alpha \rangle .
\end{eqnarray*}
Using relation (\ref{Relation1}), and that $\langle \beta |H|\alpha\rangle < 0$
if $\alpha\ne \beta$, we obtain
\begin{eqnarray*}
 \Delta E &\le&
  \frac{1}{C }\sum_{h=\bar{h}}^{m\bar{h}-1}
    \left[ \frac{2}{(C(h))^2}-\frac{2\cos(2\pi/\bar{h}) }{C(h)C(h+1)} \right]
  \sum_{\{\beta,\ h_\beta=h+1,\ g(n_\beta) \ge h+1 \}}
    \sum_{\{\alpha,\ h_\alpha=h,\ n_\alpha=n_\beta-1 \}}
   |\langle \beta|H|\alpha\rangle|   \\
 &+& \frac{1}{C }\sum_{h=\bar{h}}^{m\bar{h}-1}
    \left[ \frac{2}{(C(h))^2}-\frac{2\cos(2\pi/\bar{h}) }{C(h)C(h+2)} \right]
  \sum_{\{\beta,\ h_\beta=h+2,\ g(n_\beta) \ge h+2 \}}
    \sum_{\{\alpha,\ h_\alpha=h,\ n_\alpha=n_\beta-1 \}}
   |\langle \beta|H|\alpha\rangle|.
 \end{eqnarray*}
A state $|\beta\rangle$ of height $h$ has at most one inflation move
transition to only one of all states $|\alpha\rangle$ of height $h-1$.
The same applies to self-surgery transitions that change the height by one.
A tree of height $h$ can have at most $h-2$ self-surgery transitions that
decrease the height by two. Using these estimates, and that $C(h+1)\le C(h)$
(since $C(n,h+1)\le C(n,h)$ for all configurations included in the trial
state), we obtain
\begin{eqnarray*}
  \Delta E &\le & \frac{1}{C} a_1 \sin^{2}(\pi/\bar{h})
   \sum_{h=\bar{h}}^{m\bar{h}-1} \left(
     \frac{2C(h+1)}{(C(h))^2} + \frac{(h-2)C(h+2)}{(C(h))^2}
   \right) \\
  &\le& \frac{1}{C} a_2 \sin^{2}(\pi/\bar{h}) \bar{h}m
    \sum_{h=\bar{h}}^{m\bar{h}-1} \frac{1}{C(h)}
  \le a_2 \sin^2(\pi/\bar{h}) m\bar{h} \sim \frac{a_3}{\bar{h}}
\end{eqnarray*}
for large $\bar{h}$, where $a_1$, $a_2$, and $a_3$ are constants.

\begin{figure}
\begin{center}
\includegraphics[width=10cm]{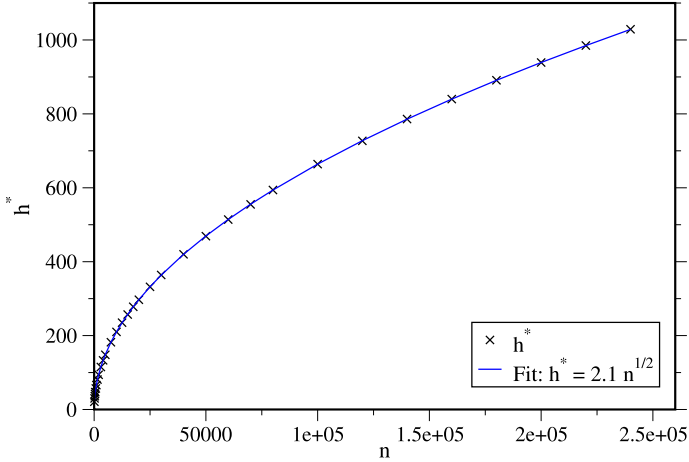}
\caption{Growth of $h^*$ (value of height where $C(n,h) $ is maximal at a given $n$)
 as a function of $n$ as obtained from (\ref{height_dist}). The growth is proportional to $\sqrt{n}$.}
\label{max_height}
\end{center}
\end{figure}

We need to choose the functions $g(n)$ and $k(n)$.
The authors of Ref.~\cite{Drmota:08} showed that for $n\to \infty$, 
\begin{equation}
C(n,h)\sim C(n) 2 b \sqrt{\frac{\rho\pi^5}{n}} \beta^4 \sum_{k\ge 1}k^2(2k^2\beta^2 \pi^2-3)\exp(-k^2\pi^2\beta^2),
\label{height_dist}
\end{equation}
where $\rho\approx 0.3383219$, $b\approx 2.6811266$,
$\beta=2\sqrt{n}/hb\sqrt{\rho}$, and\cite{Otter:48,Robinson:75}
\begin{equation*}
C(n) \sim \frac{b\sqrt{\rho}}{2\sqrt{\pi}} n^{-3/2}\rho^{-n}.
\end{equation*}
Eq.~(\ref{height_dist}) is asymptotically valid for arbitrary but fixed
$\delta$ and
$(\delta\sqrt{\log n})^{-1} \le h/\sqrt{n} \le \delta \sqrt{\log n}$.
It can be seen from Fig.~\ref{max_height} that the number of configurations
at a given $n$ is largest for $h^*(n)=2.1\sqrt{n}$. It follows from
Eq.~(\ref{height_dist}) that $C(n,h+1) \le C(n,h)$ in its region of validity
if $h\ge h^*(n)$. It is easy to check that $C(n,h+1) \le C(n,h)$ for
any configuration in Eq.~(\ref{eq:trial:state}) if we choose
$g(n)=\delta\sqrt{n\log n}$  and $k(n)= 2.1\sqrt{n}$.
Here, $\delta$ is chosen in such a way that there exists, for a given
$N$, at least one $n$ such that $\bar{h} \le h\le g(n)$.
Using  this choice of $g(n)$ and $k(n)$, we obtain the following estimate for
the energy gap
\begin{equation}
 \Delta E \le \frac{a_3}{\sqrt{N}},
\label{eq:tree:gap}
\end{equation}
where $a_3$ is a constant. 

It is likely that $C(n,h+1) \le C(n,h)$ as long as $h\ge h^*(n)$ (and not
only for $\delta\sqrt{n\log n} \ge h\ge h^*(n)$ as in the previous paragraph).
If this is true then we can derive a tighter bound for the gap.
Indeed, asymptotically $h\ge h^*(n)$ for any $h$ in Eq.~(\ref{eq:trial:state})
if we choose the following function $k(n)=an^{\kappa}$, where
$1/2\le\kappa\le1$ and $a$ is a constant. In this case, the gap scales
as $N^{-\kappa}$ and the upper bound is $N^{-1}$. The states with $\kappa$
that is close to 1 have support only on an exponentially small number of tree
configurations and thus detecting the $N^{-1}$ scaling in Monte Carlo
simulations seems unfeasible, see \ref{Appendix:Numerics}. 

Rescaling\footnote{We drop an exponentially small fraction of states, corresponding to some trees whose height scales slower than $O(\sqrt{n})$ and  whose connectivity scales faster than $N$} this Hamiltonian to a graph Laplacian $\cal L$  , we obtain a scaling of the gap as $N^{-3/2}$ (from Eq.~(\ref{eq:tree:gap})). Plugging this into the gap estimates for general local Hamiltonians in \ref{Appendix:Spectrum} we find an upper bound of $\sqrt{N}$. We will see below, in the numerical results of \ref{Appendix:Numerics} that the gap actually scales as $N^{-1.75}$, which is still not enough by itself to prove gaplessness of any model.

\subsection*{Estimating Cheeger's constant}

 \begin{figure}[ht]
\begin{center}
\includegraphics[width=0.6\columnwidth]{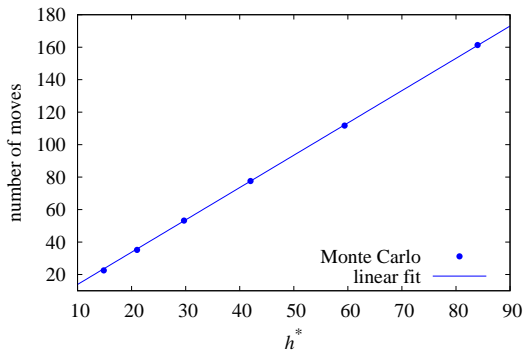}
\end{center}
\caption{Scaling of the average number of moves that take a tree across the mean height $h^*$. The perfect linear behavior shows that the number of moves indeed satisfies the linear upper bound.}
\label{fig:movecount}
\end{figure}

Since the gap of the graph Laplacian seems to be too weak a bound for the Cheeger constant, we next try to estimate the Cheeger constant directly. Using the same cut as in the above proof, we consider a split of the configuration space into those trees which are smaller or larger than the mean height $h^*$. The fraction of trees at this boundary can be obtained from equation (\ref{height_dist}) to be $C(n,h^*)/C(n) \sim 1/\sqrt{n}$. Multiplying this with the number of moves across the cut, which can be bounded by $O(h^*) \sim O(\sqrt{n})$ we obatin as estimate for Cheeger's constant
\begin{equation}
h \le O(1/\sqrt{N}) O(h^*(N)) \le O(1/\sqrt{N}) O(\sqrt{N}) = const. ,
\end{equation}
which is now border line regarding the absence of a gap. We hence tried to check numerically whether the number of height changing moves, which we have bounded by $h^*$ might grow slower then linear. However it turns out, as is shown in Fig. \ref{fig:movecount} that the scaling indeed satisfies this bound.

\section{Numerical study of off-lattice loop gases and string nets}
\label{Appendix:Numerics}


\begin{figure}[ht]
\begin{center}
\includegraphics[width=0.5\columnwidth]{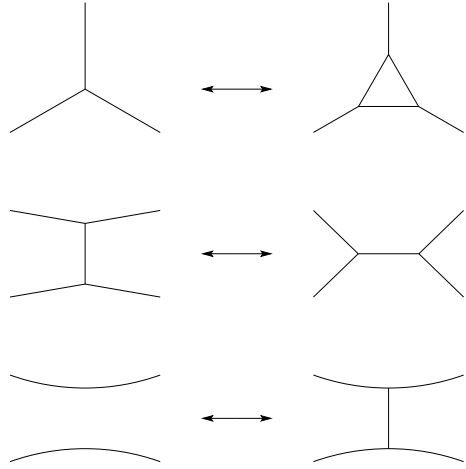}
\end{center}
\caption{String net moves. From top to bottom: inflation move, $f$-move,
and surgery move.}
\label{fig:sn:moves}
\end{figure}

In this appendix, we turn to a numerical analysis of the off-lattice loop gas
and string net models by Monte Carlo and exact diagonalization. 
The off-lattice loop gas model has been introduced in the previous appendix. 
In a similar fashion, off-lattice string nets can be defined as
indistinguishable (unlabeled) planar trivalent graphs, where we exclude
configurations with bubbles or parallel edges. We define the system size $N$
as the maximum number of faces, which is related to the number of vertices
$n_v$ via $N=(n_v+4)/2$. The Hamiltonian again takes the form of a graph
Laplacian and is defined by the three types of moves illustrated in
Fig.~\ref{fig:sn:moves}. The string net ground state again is an equal-weight
superposition of all string net configurations $s$
$$
  |\psi_0 \rangle = \sum_{s} |s\rangle.
$$

\begin{figure}[t]
\begin{center}
\includegraphics[width=0.6\columnwidth]{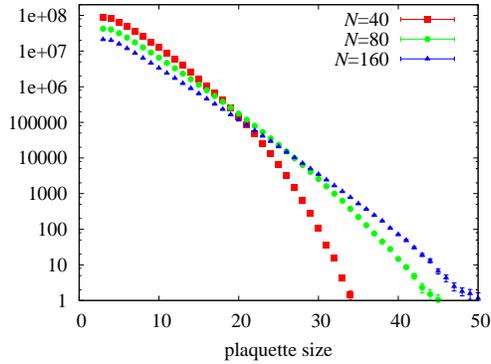}
\end{center}
\caption{Distribution of the plaquette sizes in the off-lattice string net}
\label{fig:plaquettesize}
\end{figure}

The definition of the graph Laplacian ${\cal L}$ in \ref{Appendix:Spectrum}
includes the constant $c=\min_i(d_i/\sum_{j\neq i}|H_{ij}|)$. The sum in
the denominator is basically the (weighted) vertex degree of vertex
(basis state) $i$. It typically grows faster than the system size $N$.
However, the number of vertices (or basis states in the Hilbert space) for
which the weighted vertex degree is not bounded by a linear function of $N$
is exponentially small. As an example, surgery moves within a plaquette of a string net grow like the square of the the number of edges in the plaquette, since one can do surgery between any pair of edges. Since plaquettes with a large number of edges are exponentially suppressed (see Fig. \ref {fig:plaquettesize}), we  discard those exponentially
rare states and restrict the Hilbert space ${\cal H}$ to ${\cal H}'$ such
that a basis state $|i'\rangle$ belongs to ${\cal H}'$ iff
$|i'\rangle\in {\cal H}$ and $\sum_{j\neq i'}|H_{i'j}|/d_{i'}$ is bounded by
a linear function of $N$. The graph Laplacian ${\cal L}'$ is then defined following \ref{Appendix:Spectrum}, with $c'$  proportional to $N^{-1}$.

The graph Laplacian is gapless by definition. In the following, we also
demonstrate the gaplessness of $N{\cal L}'$ for both the off-lattice loop gas
and string net models by numerically determining the gap to the first excited
mode of ${\cal L}'$.

\subsection*{Monte Carlo method}

One can extract the gap of the graph Laplacian from classical Monte Carlo
simulations \cite{henley} by ensuring that the Monte Carlo transition matrix
is proportional to the graph Laplacian ${\cal L}'$:
\begin{equation}
 T = (1 - \alpha) {\mathbb I} + \alpha {\cal L}',
\label{eq:th}
\end{equation}
where $T$ is the transition matrix and $\alpha$ is the coefficient of
proportionality.

We perform Monte Carlo simulations by first calculating the number
$N_\text{moves}$ of possible moves for a given configuration.
Let $\widetilde{N}_\text{moves}=1/c'$. Then $\widetilde{N}_\text{moves}$
is a linear function of $N$ and it is larger or equal to $N_\text{moves}$
for any configuration. We randomly pick one of the possible moves and
accept it with probability $N_\text{moves}/\widetilde{N}_\text{moves}$.
The total probability to make a move is $1/\widetilde{N}_\text{moves}$ and it
is the same for any move. The probability to stay in the given configuration
is $1-N_\text{moves}/\widetilde{N}_\text{moves}$. This transition matrix is
equal to the graph Laplacian ${\cal L}'$ ($\alpha=1$).

The enumeration of possible moves can be implemented very efficiently for
rooted trees allowing us to access large system sizes. However, this is not
the case for the string nets as one needs to check for graph isomorphisms
for every possible move, which restricts us to considerably smaller sizes. 
We use the isomorphism test suggested in Ref.~\cite{plantri}.

The gap is related to the autocorrelation time $\tau_A$ of some observable
$A$ as
$$
  \Delta=1-e^{-1/\tau_A},
$$
where $\tau_A$ is measured in Monte Carlo time.
The observable $A$ must be chosen carefully -- it must couple to the lowest 
mode in order to extract the gap.

\subsection*{Exact diagonalization results}

\begin{figure}[t]
\begin{center}
\includegraphics[width=0.49\columnwidth]{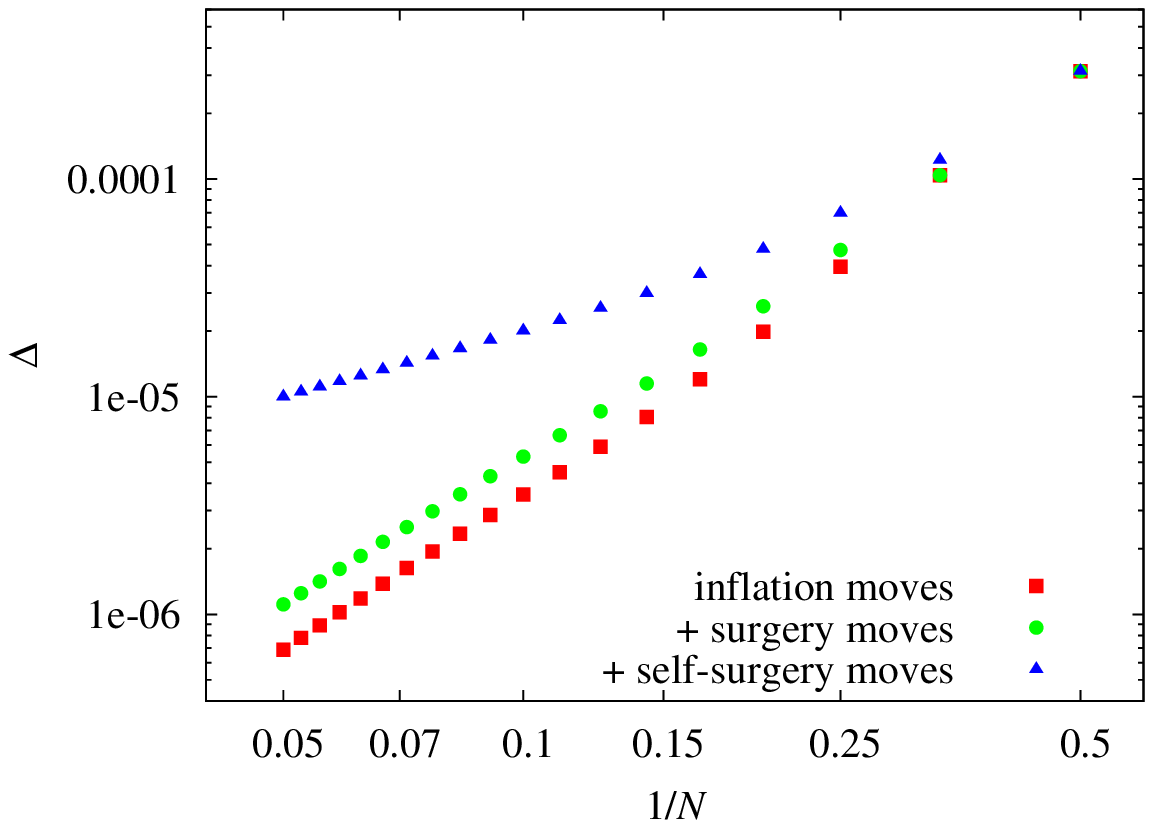}
\includegraphics[width=0.49\columnwidth]{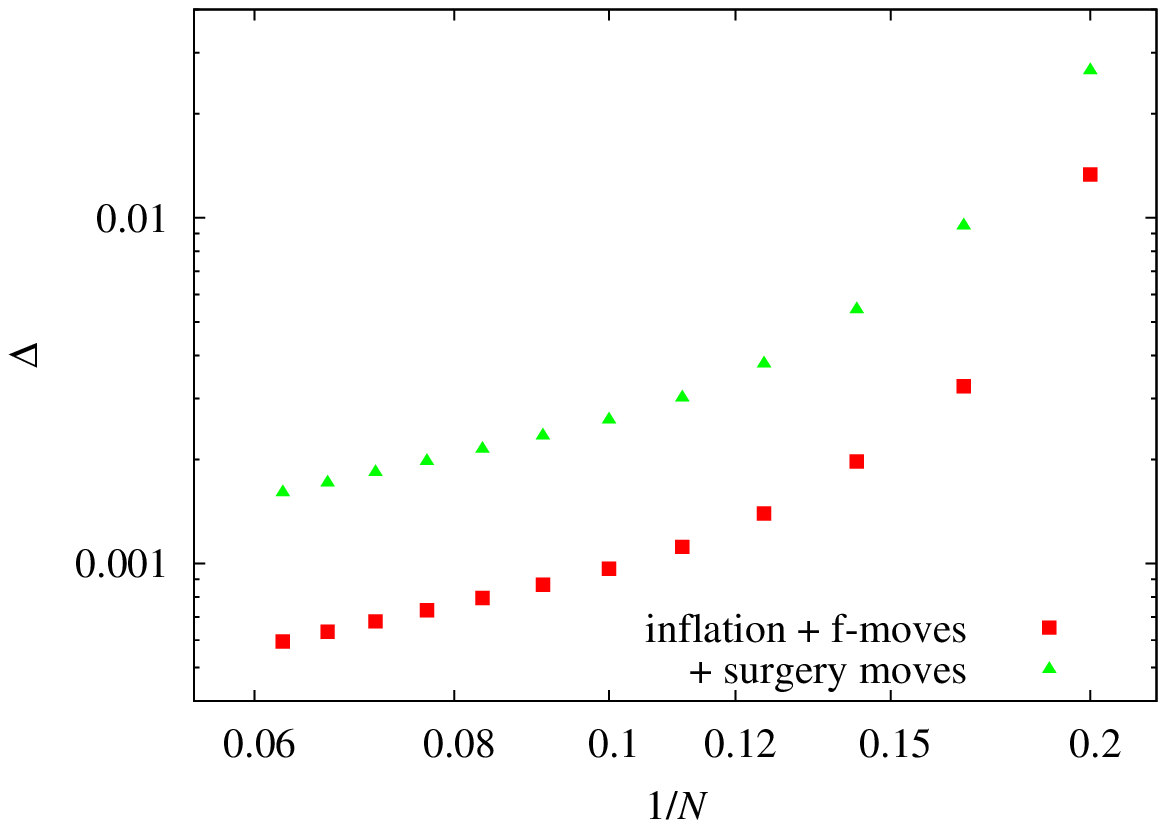}
\end{center}
\caption{
{\sl Exact diagonalization results:}
The gap of the graph Laplacian as a function of the inverse
system size $1/N$ obtained from exact diagonalization. Results for
the off-lattice loop gas are shown on the left ($1/c'=3200N$), and for
the off-lattice string net on the right ($1/c'=30N$).}
\label{fig:ED-GraphLaplacian}
\end{figure}

We first analyze the spectrum of the graph Laplacian for off-lattice loop gas
and string net model using exact diagonalization. In particular, we calculate
the lowest gap using the Lanczos algorithm \cite{Lanczos} as shown in 
Fig.~\ref{fig:ED-GraphLaplacian}.
For the off-lattice loop gas, we find that if we only consider inflation and
surgery moves, the graph Laplacian times the system size is clearly gapless
-- consistent with the proof in \ref{Appendix:C}.
Adding self-surgery moves the gap of $N{\cal L}'$ appears to extrapolate
to a {\sl finite} value. The same is seen for the off-lattice string net.
However, this apparent convergence is misleading as we will see below in
Monte Carlo simulations of larger systems.

\subsection*{Monte Carlo results}
\label{Appendix:NumericsSphere}

\begin{figure}[t]
a) loop gas \\
\begin{center}
  \includegraphics[width=0.49\columnwidth]{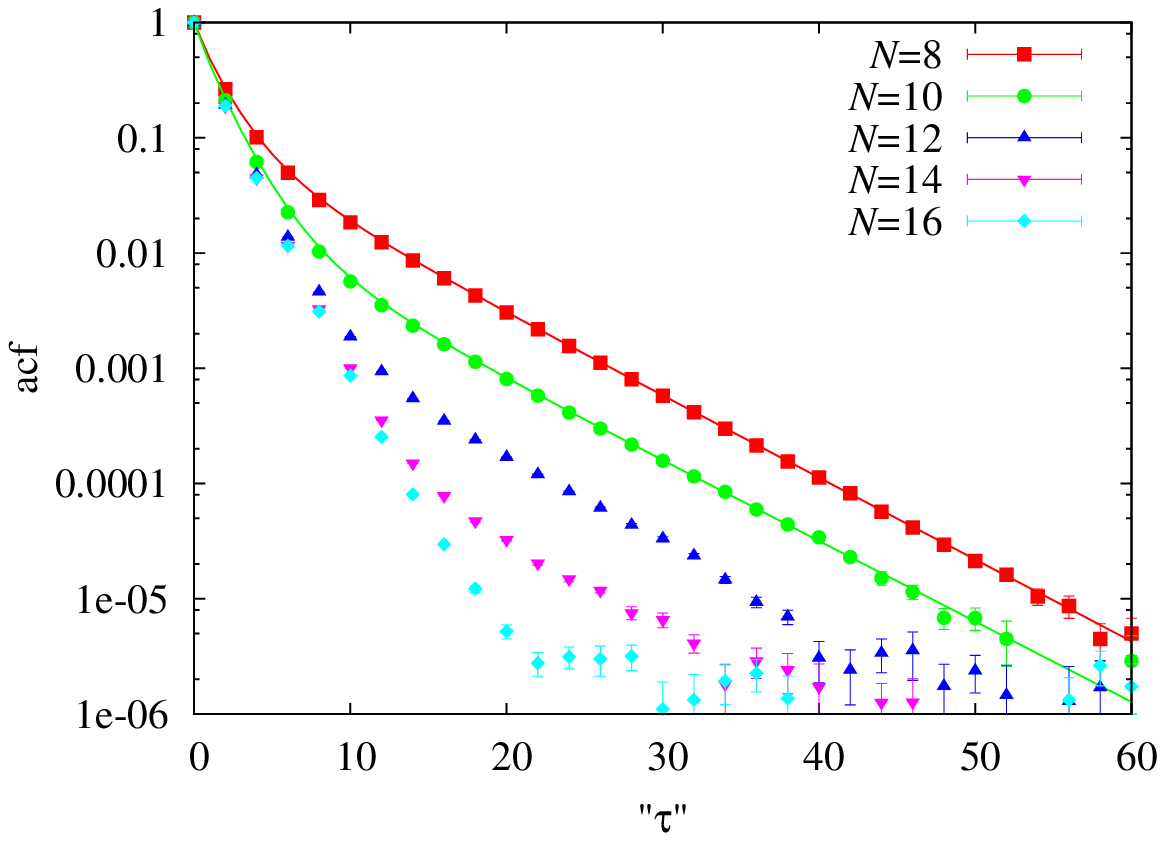}
  \includegraphics[width=0.49\columnwidth]{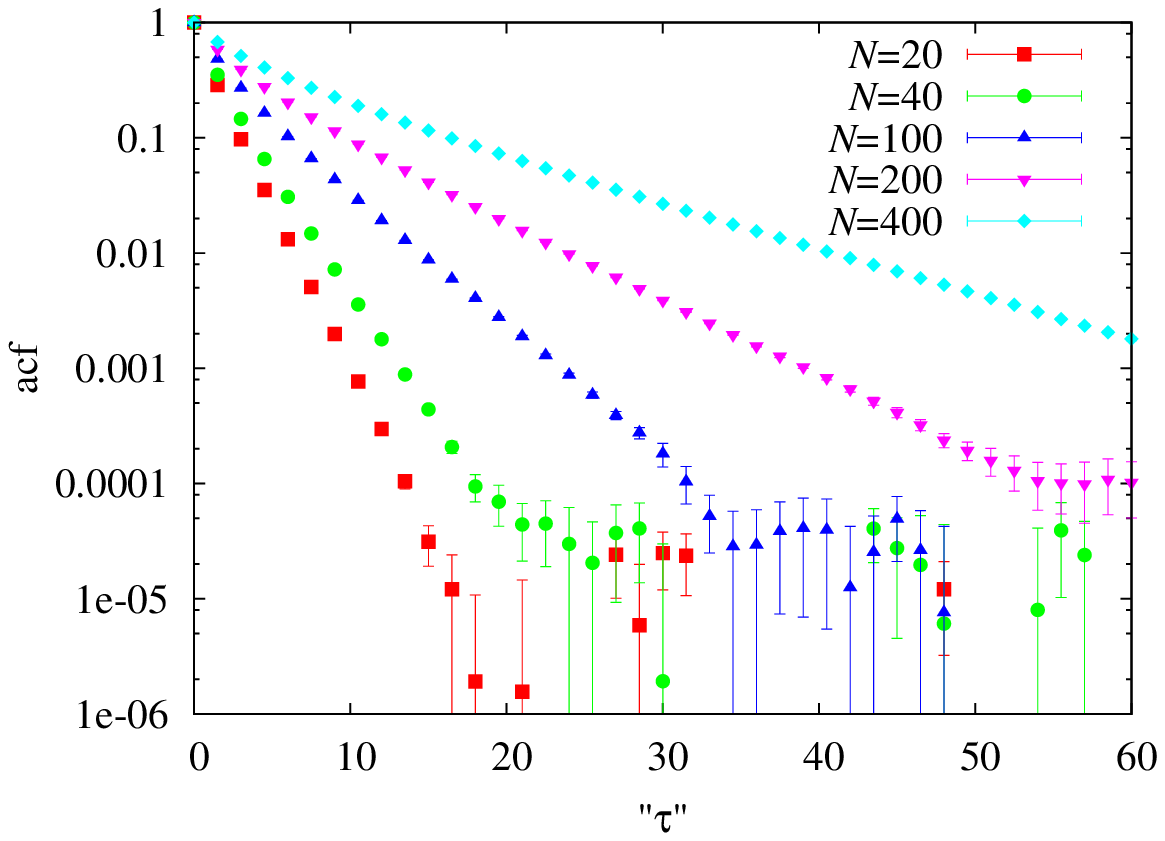}
\end{center}
b) string net \\
\begin{center}
  \includegraphics[width=0.49\columnwidth]{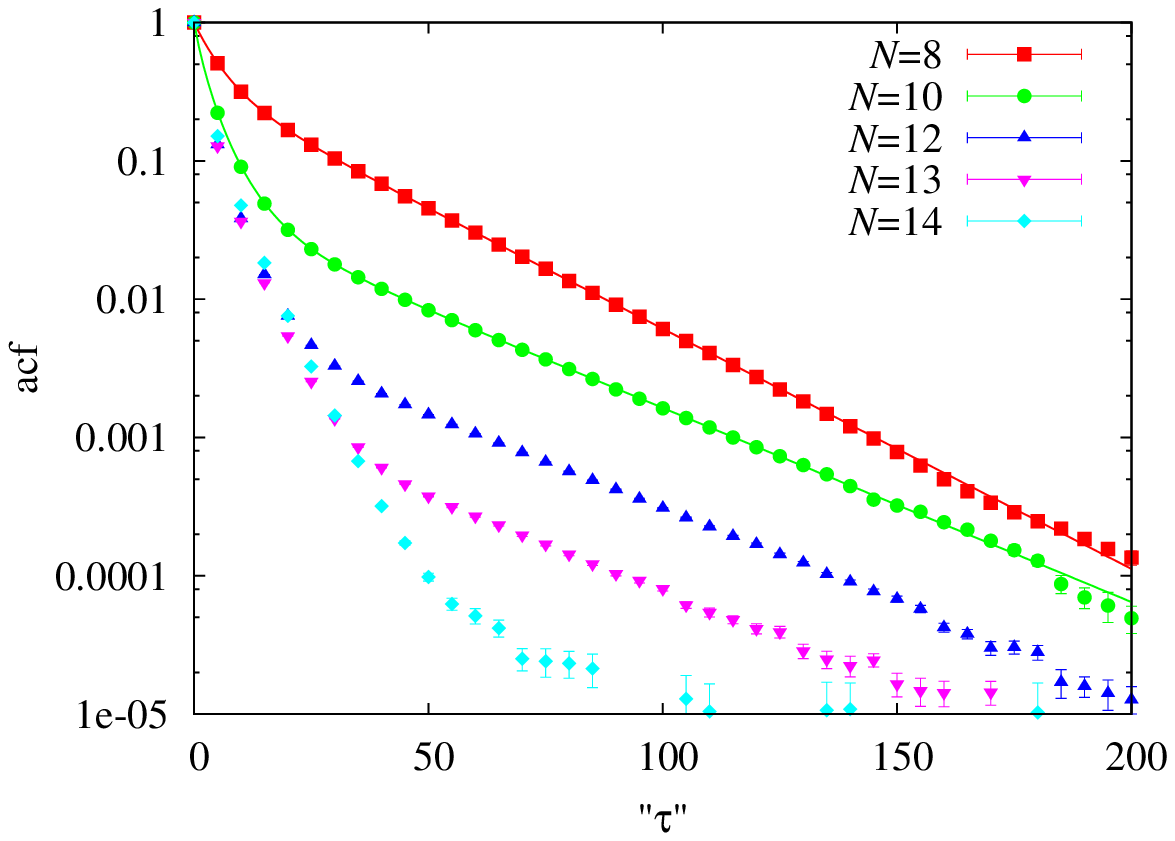}
  \includegraphics[width=0.49\columnwidth]{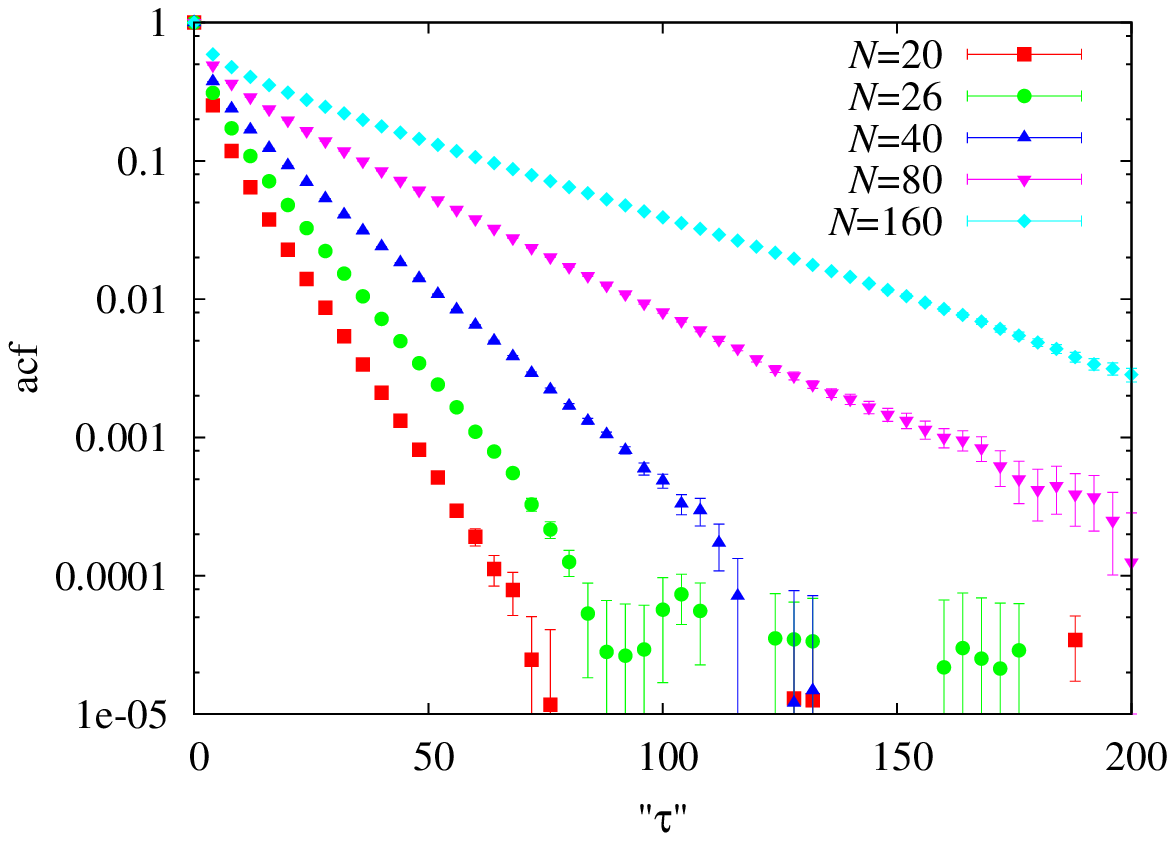}
\end{center}
\caption{
{\sl Autocorrelation functions:}
Monte Carlo results for the autocorrelation function of
a) the tree height for the loop gas
b) graph diameter for the string nets.
The Monte Carlo time $\tau$ is given in arbitrary units. 
Lines denote exact diagonalization results.}
\label{fig:lg:acf}
\end{figure}

\begin{figure}[t]
\begin{center}
\includegraphics[width=0.49\columnwidth]{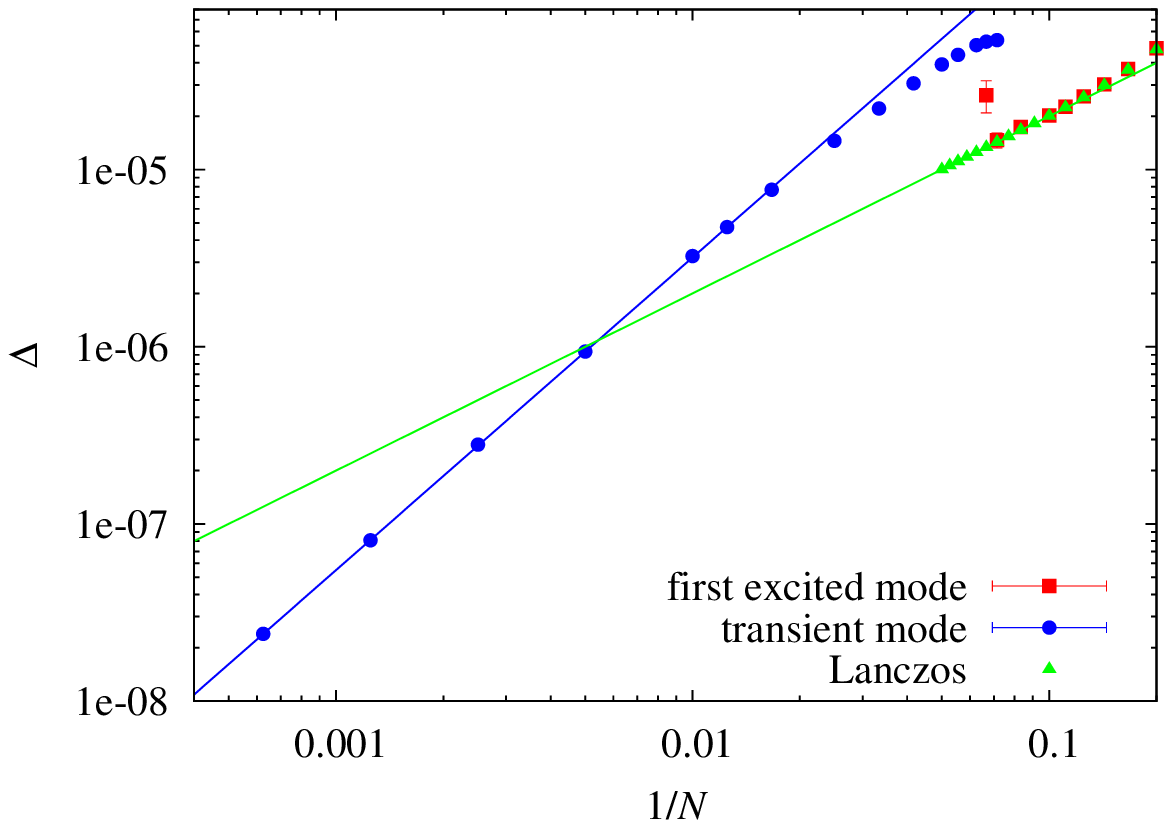}
\includegraphics[width=0.49\columnwidth]{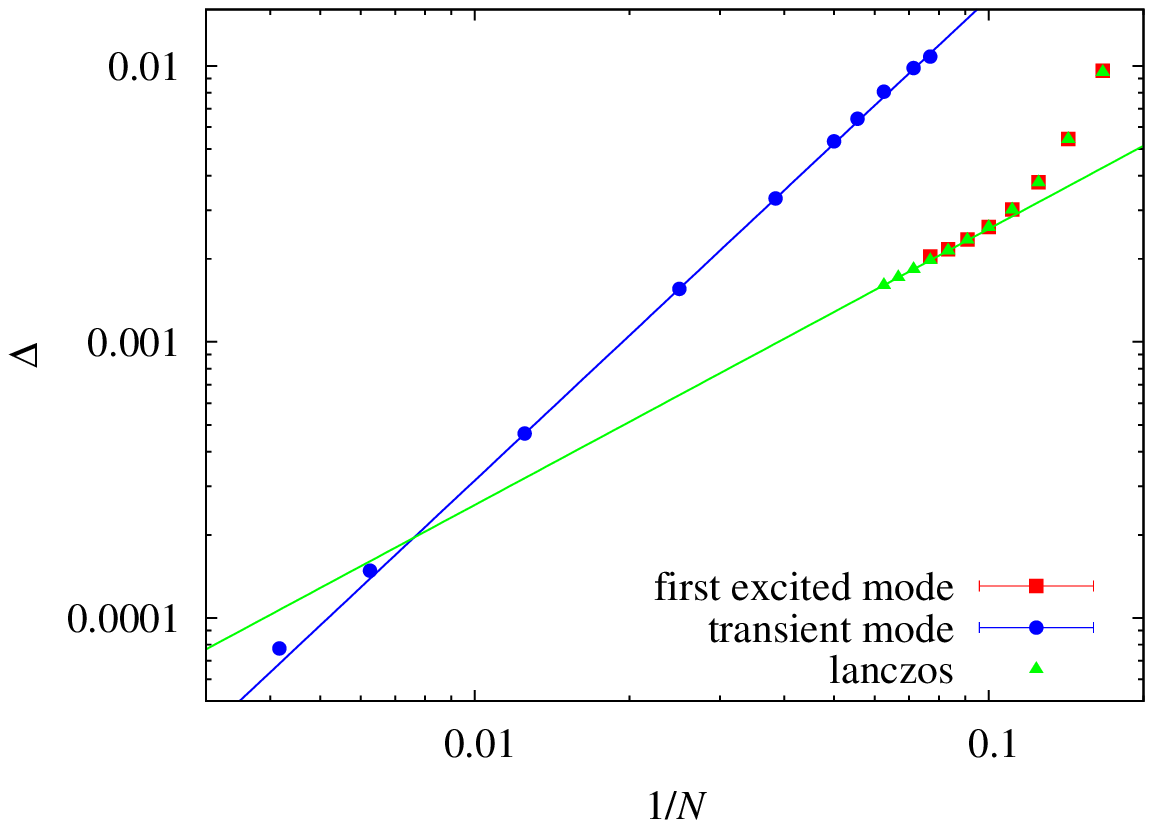}
\end{center}
\caption{
{\sl The gap of the graph Laplacian} as a function of
the inverse system size $1/N$ obtained from exact diagonalization and Monte
Carlo simulations. Results for the off-lattice loop gas are shown on the left
($1/c'=3200N$), and for the off-lattice string net on the right ($1/c'=30N$).
The finite-size extrapolation for large system sizes reveals that the gap
closes as $N^{-1.765(6)}$ for the loop gas and as $N^{-1.746(4)}$ for
the string net.}
\label{fig:lg:gap2}
\end{figure}


To determine the gap of the graph Laplacian in Monte Carlo simulations we measure
the autocorrelation function of the tree height (for the loop gas) or graph diameter 
(for the string net). 
As shown in Fig.~\ref{fig:lg:acf} we find that, for small system sizes, the autocorrelation 
functions couple to high energy modes resulting in a fast initial decay before turning to 
a slower asymptotic behavior corresponding to the smallest gap.
This makes it difficult to extract the gap for large system sizes, since at long times the 
autocorrelation function is very small and noisy. 
To overcome this obstacle, we then fit the autocorrelation function to the transient behavior
at intermediate times, which will {\sl overestimate} the gap, thereby providing an upper
bound.

As shown in Fig.~\ref{fig:lg:gap2} the gap obtained from the asymptotic
behavior for small system sizes agrees perfectly with the exact
diagonalization results. For intermediate system size the transient behavior
overestimates the gap. However, for very large system sizes we see that this
upper bound goes to zero with increasing system size faster than the gap
extrapolated from the exact diagonalization results.
Fitting the large-$N$ behavior to a power-law $N^{-1-z}$ we obtain
$z=0.765(6)$ for the loop gas, and $z=0.746(4)$ for the string net.
The graph Laplacian times the system size ($N{\cal L}'$) is hence {\sl gapless}.

There is a simple heuristic argument for the crossover scale between the gapped behavior
for small $N$ and the gapless behavior for large $N$ in the case of the string net. 
This argument is best discussed in the dual picture of triangulations of the sphere.
For a small number of triangles the geometry is always that of a simple sphere and the updates
mix well resulting in gapped behavior. For $N$ larger than about 40, one can -- for the first time --
find triangulations that correspond to a geometry of two spheres described by two icosahedra 
connected by a narrow neck.
Updates no longer mix well, in particular there is a slow mode associated with shifting triangles
from one sphere to the other via the narrow neck. It is this slow mode which dominates the mixing
times for large system sizes resulting in gapless behavior.

\subsection*{Outer planar triangulations}

\label{Appendix:N-gons}

\begin{figure}[t]
\begin{center}
\includegraphics[width=0.6\columnwidth]{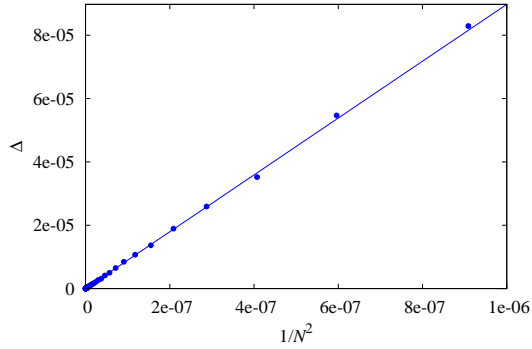}
\end{center}
\caption{The gap of the graph Laplacian for outer planar triangulations
($1/c'=N$).}
\label{fig:ngon}
\end{figure}

Finally, in Fig. \ref{fig:ngon} we show Monte Carlo results for the gap of the graph Laplacian for outer planar triangulations as a function of the number of triangles $N$. The observed decrease as $N^{-2}$ is consistent with but faster than the bound $N^{-3/2}$ derived in \ref{Appendix:OuterPlanar}. This scaling is fast enough to marginally show the gaplessness of any local Hamiltonian based on $F$ moves for this model.

\section*{References}
\bibliographystyle{elsarticle-num}

\end{document}